\newtheorem{theorem}{Theorem}[section]
\newtheorem{lemma}{Lemma}[section]
\newtheorem{corollary}{Corollary}[section]
\newenvironment {proof} {{\bf Proof.}}{\hspace*{\fill}$\Box$\par\vspace{4mm}}
\def\r{\rightarrow}
\title{
 Analyzing  the Accuracy of the Fitch Method for
Reconstructing Ancestral States on Ultrametric Phylogenies }
\author{Louxin Zhang\thanks{Department of Mathematics,  National University of
Singapore (NUS), Singapore 117543. This work was supported by
ARF(R146-000-109-112).
 Email: matzlx@nus.edu.sg}, Jian Shen\thanks{Department of Mathematics, Texas State
University, San Marcos, TX 78666, USA. This work was partially
supported by
 NSF (CNS 0835834) and Texas Higher Education Coordinating Board (ARP
003615-0039-200). E-mail: js48@txstate.edu}, Jialiang
Yang\thanks{MPI-CAS Institute of Computational Biology, CAS at
Shanghai. Email: yangjialiang@picb.ac.cn}, Guoliang Li\thanks{
Department of Computer Science, NUS. Email: ligl@gis.a-star.edu.sg}
}
\begin{document}
\maketitle

\begin{abstract}
Recurrence formulas are presented for studying the accuracy of the Fitch method
for reconstructing the ancestral states in a given phylogenetic tree.
As their applications, we analyze the convergence of the accuracy of
reconstructing the root state  in a complete binary tree of $2^n$ as $n$ goes to infinity
 and also give a lower bound on  the accuracy of reconstructing
the root state in an ultrametric tree.

\end{abstract}

{\bf Keywords} Ancestral state reconstruction,  analysis of
reconstruction accuracy, Fitch method, phylogenetic trees.


\section{Introduction}

Ancestral sequence reconstruction incorporates sequences from modern
living things  into evolutionary models to estimate the
corresponding sequence of an ancestor that died millions of years
ago. This approach to understanding proteins  was first suggested by
Zukerkandl and Pauling  in their seminal work
\cite{Pauling_Acta_1963} in 1963. With the rapid accumulation of
biomolecular sequence data and advances in computational biology,
 it has become an important approach to studying the
origin and evolution
  of genes, proteins and even whole genomes  (see for example
 \cite{Liberles_book_2007} and \cite{Thoronton_NRG_2004})

The Fitch method \cite{Fitch_SysZoo_1971} was  the first phylogenetic technique used for inferring
the ancestral states of a character when the phylogeny
that relates the ancestor to the extant species is
known \cite{Baba_MBE_84}. As a parsimony method, it estimates
the ancestral state by minimizing the total number of hypothetical
substitutions in  all branches that are used to explain the evolution
of the character states. It is efficient and
accurate for  sequences  that are reasonably similar to each other.
However, the accuracy of the Fitch method for
reconstructing ancestral states has yet to be well studied
\cite{Li_SysBiol_2008,Maddison_SysBiol_1995,Salisbury_SysBiol_2001,Zhang_JME_1997}.

In this work,  we present a set of recurrence formulas for analyzing
the reconstruction accuracy of the Fitch method (in
Theorem~\ref{mainThm}). These formulas are derived from a work of
Maddison \cite{Maddison_SysBiol_1995} (also see
\cite{Steel_Thesis_89}).
 They are simple and useful  as
demonstrated in solving two theoretical problems that  arise from studying
the reconstruction accuracy of the Fitch method.

 The first
problem is to analyze the convergence of the accuracy
of the Fitch method for reconstructing the root state in
 a complete phylogenetic tree in the equal-length
branch and two-state Jukes-Cantor model (see Section 2 for details). Let $p$
denote the conservation rate in each branch.
 In \cite{Steel_Thesis_89}, Steel showed  that, 
when the Fitch method is applied, the accuracy of reconstructing the root state
from all leave states in the complete binary tree of $2^n$ leaves converges
 as $n$ goes to infinity to $\frac{1}{2}$ if $\frac{1}{8} \leq
p\leq \frac{7}{8}$  and $  \frac{1}{2} +
\frac{1}{2}\frac{\sqrt{(8p-7)(4p-3)}}{(2p-1)^2}$  if
$\frac{7}{8}\leq p\leq 1$. This result was proved under the assumption that
suitable limits exists.  However, the existence of these limits
is not trivial.  In this paper, we fill the gap left in \cite{Steel_Thesis_89}
by proving that these limits exist.  In addition, we also show that  the
reconstruction accuracy diverges when $p\leq \frac{1}{8}$.

Complete phylogenetic trees in which all branches have equal length are
 special ultrametric trees. In an  ultrametric tree, each branch has
its own branch length $l(e)$, with conservation rate
$p(e)=\frac{1}{2}\left(1+e^{-l(e)}\right)$ in the two-state Jukes-Cantor model,
 but requiring that the sum of branch lengths is constant in each path from the
root to a leaf.
A counterintuitive fact is that the reconstruction accuracy  of the Fitch
method is not a monotonic function of the size of taxa selected for
reconstruction of the root state (even for ultrametric trees)
\cite{Li_SysBiol_2008}.  Hence, Li et al asked whether
the accuracy $\mbox{RA}_F$ of the Fitch method for reconstructing the root
state from all leaf states  is always  larger than or equal to
the conservation rate along  a  root-to-leaf path or not in an ultrametric
tree. Recently,  this problem is positively answered by Fischer and Thatte
\cite{Fischer_Man_2008}.
 In the second part of this paper, we
present a stronger lower bound on $\mbox{RA}_F$ for arbitrary ultrametric trees.
Our  bound  implies that $\mbox{RA}_F$  is not less than the accuracy of
reconstructing the root state from any three leaves in an ultrametric tree.

\section{The Fitch method and its reconstruction accuracy}


Let $C$ be a character with multiple states. Given a phylogenetic
tree $T$ of the character $C$ in which each leaf has a state, the
Fitch method estimates the root
 state from the leaf states in two steps. It first computes a subset
$S_u$ of states for each node $u$ of $T$ as follows:
  \begin{enumerate}
   \item  If $u$ is a leaf, $S_u$ contains only  the state of $u$;
   \item If $u$ is an internal node having children $v$ and $w$,
      $S_u$ is equal to $S_v \cup S_w$ if $S_v$ and $S_w$  are disjoint and
  $S_v\cap S_w$  otherwise.
 \end{enumerate}
After the subset $S_r$  for the root of $T$ is computed,
 the method selects a state as the root state  from $S_r$ randomly.
In other words, a state is selected as the root state  with probability
$\frac{1}{|S_r|}$, where $|S_r|$ denotes the number of states
contained in $S_r$.

Assume the mutation process along each branch of  the given tree
 is modeled as a stochastic process in which a state is replaced by another
with some probability. The Fitch method reconstructs correctly a root  state $s$
 from a set $D$ of  leave states only if  $s$ evolves into the leaf states
in  $D$. Hence,
the accuracy of the Fitch method for reconstructing the state of the root of $T$,
denoted by $\mbox{RA}_F(T)$,
 is defined to be the expected probability that
the Fitch method outputs a true state from  a set $D$  of leave states.
Let $\Pr_r [D|s]$ denote the probability that the root state
 $s$ evolves into the leaf states in $D$. Then,
{\small
 \begin{eqnarray}
  \mbox{RA}_{F}(T) 
=\sum_{s, D} p_r(s) \mbox{Pr} _r [ D|s] \Pr[ s \mbox{ is output from
} D],  \label{def_orginial1}
\end{eqnarray}
}
where $p_r(s)$ is the prior probability of $s$ being  the root state.

\section{Recurrence formulas  for analyzing the reconstruction accuracy}

In the rest of this paper, we assume that the character has only
 two states 0 and 1 and the root
takes these two states with equal prior probability.
By definition, the Fitch method selects 1 with probability 1 if  $\{1\}$ is
the state subset $S_r(D)$
computed from $D$ at the root in the first step. Otherwise, it selects 1 from
$S_r(D)=\{0,1\}$ with
probability $\frac{1}{2}$.
Therefore, by symmetry,
(\ref{def_orginial1}) becomes

{\small
 \begin{eqnarray}
  \mbox{RA}_{F}(T) = \sum _{D}  \mbox{Pr} _r  [ D|1]
 (\Pr[ S_r(D)=\{1\}]
 + \frac{1}{2}\Pr[ S_r(D)=\{0, 1\}]). \label{def_original}
\end{eqnarray}
}

  Let
\begin{eqnarray*}
\mbox{Pr} _X [S |s]=\sum _{D'}  \mbox{Pr}_X [ D'|s] \Pr[ S_X(D')=S ]
\end{eqnarray*}
 for a node $X$, a state $s\in \{0, 1\}$, $S=\{1\}, \{0, 1\}$,
and a set $D'$ of possible states of the leaves below $X$.
$\Pr_X [S|s]$ is the probability that the Fitch method outputs state subset $S$
at $X$ in its
first step given the true state of $X$ is $s$.
By symmetry,
{\small
\begin{eqnarray*}
  & \mbox{Pr}_X[ \{1\}|1] = \mbox{Pr}_X[ \{0\}|0], \\
  & \mbox{Pr}_X[ \{0\}|1] = \mbox{Pr}_X[ \{1\}|0],\\
  & \mbox{Pr}_X[ \{0, 1\}|1] =
   1-  \mbox{Pr}_X[ \{1\}|1] - \mbox{Pr}_X[ \{0\}|1],\\
  &  \mbox{Pr}_X[ \{0, 1\}|0] =
   1-  \mbox{Pr}_X[ \{1\}|0] - \mbox{Pr}_X[ \{0\}|0].
\end{eqnarray*}
}

For a node $X$ and a state $s=0, 1$, we further set
$$\alpha _X = \mbox{Pr} _X [\{s\}|s],~~\beta_X = \mbox{Pr} _X [\{1-s\}|s].$$
Then,
$$\mbox{Pr}_X[\{0, 1\}| s]=1-\alpha _X -\beta _X .$$
Then,  (\ref{def_original}) becomes
{\small
\begin{eqnarray}
 \mbox{RA}_{\mbox{F}}(T)&\hspace*{-1em}= &\hspace*{-1em} \mbox{Pr}_{r}[\{1\}|1] +\frac{1}{2}\mbox{Pr}_{r}[\{0, 1\}|1]\nonumber \\
&\hspace*{-1em}= &\hspace*{-1em} \frac{1}{2} + \frac{1}{2}\left(\mbox{Pr}_{r}[\{1\}|1] -\mbox{Pr}_{r}[\{0\}|1]\right) \nonumber \\
&\hspace*{-1em}= &\hspace*{-1em} \frac{1}{2} + \frac{1}{2}
(\alpha_r - \beta_r).
\label{definition}
\end{eqnarray}
}

Let $Z$ be an internal node  and have $X$ and $Y$ as its children. Furthermore, we
let the conservation probability on branches $ZX$ and $ZY$ be $p_X$ and $p_Y$,
respectively. The subset $S_Z$  computed at $Z$ is $\{1\}$ if and only if
one of $S_X$ and $S_Y$ is $\{1\}$ and the other is $\{1\}$ or $\{0, 1\}$.
Hence,
\begin{eqnarray}
  \alpha_Z &= & (p_X \alpha_X + q_X \beta_X) (p_Y \alpha_Y + q_Y \beta_Y)
\nonumber \\
          &&  +  (p_X \alpha_X + q_X \beta_X) (1-\alpha_Y -\beta_Y) \nonumber \\
          &&   +  (1-\alpha_X -\beta_X)(p_Y \alpha_Y + q_Y \beta_Y) \label{formula_alpha},
\end{eqnarray}
where $q_X=1-p_X$ and $q_Y=1-p_Y$.
Similarly
\begin{eqnarray}
   \beta_Z & = &  (q_X \alpha_X + p_X \beta_X) (q_Y \alpha_Y + p_Y \beta_Y)
 \nonumber\\
        &&    +  (q_X \alpha_X + p_X \beta_X) (1-\alpha_Y -\beta_Y)\nonumber \\
        &&     +  (1-\alpha_X -\beta_X)(q_Y \alpha_Y + p_Y \beta_Y) \label{formula_beta}
\end{eqnarray}
These two recurrence relations presented in \cite{Maddison_SysBiol_1995} lead to  an efficient dynamic
programming method for calculating $\alpha_r$ and $\beta_r$.
But, these two relations are not simple enough for the  theoretical study of
 the reconstruction accuracy. In the rest of this section,
 we shall establish two recurrence relations for the purpose of the theoretical analysis.

Let $$C_Z=1-\alpha_Z -\beta_Z$$ and
$$D_Z=\alpha_Z -\beta_Z.$$
If $Z$ is a leaf, we have that
\begin{eqnarray}
  C_Z=0,~~ D_Z=1. \label{Value_for_leaf}
\end{eqnarray}
Otherwise,  we have the following recurrence relations.

\begin{theorem}
\label{mainThm}
Let $Z$ be an internal node and have children $X$ and $Y$. Then,
{\small
\begin{eqnarray}
 C_Z & = & \frac{1}{2}\times [1 - C_X - C_Y + 3 C_X C_Y \nonumber  \\
    & & - (2p_X-1)(2p_Y-1)D_XD_Y], \label{recurrence_for_C}
\end{eqnarray}
}
and
{\small
\begin{eqnarray}
 D_{Z} & = &  \frac{1}{2}(2p_{X}-1) (1 +  C_Y) D_X \nonumber \\
  & &  + \frac{1}{2}(2p_Y-1)(1+ C_X) D_Y.  \label{recurrence_for_D}
\end{eqnarray}
}
\end{theorem}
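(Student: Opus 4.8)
The plan is to prove both identities by substituting (\ref{formula_alpha}) and (\ref{formula_beta}) directly into the definitions $C_Z = 1 - \alpha_Z - \beta_Z$ and $D_Z = \alpha_Z - \beta_Z$, but to arrange the computation so that it collapses to the stated right-hand sides rather than exploding into monomials in $\alpha_X, \beta_X, \alpha_Y, \beta_Y$. The first step I would take is to abbreviate the four factors already appearing in (\ref{formula_alpha}) and (\ref{formula_beta}) by $u_X = p_X\alpha_X + q_X\beta_X$, $v_X = q_X\alpha_X + p_X\beta_X$, and likewise $u_Y, v_Y$ for the child $Y$. Noting that the remaining factors $1 - \alpha_X - \beta_X$ and $1 - \alpha_Y - \beta_Y$ are by definition exactly $C_X$ and $C_Y$, the two given recurrences compress to $\alpha_Z = u_X u_Y + u_X C_Y + C_X u_Y$ and $\beta_Z = v_X v_Y + v_X C_Y + C_X v_Y$.

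The one piece of genuine insight the argument needs is that $u$ and $v$ behave well under the symmetric and antisymmetric combinations that govern $C$ and $D$. Because $p_X + q_X = 1$ and $p_X - q_X = 2p_X - 1$, I would record the identities $u_X + v_X = \alpha_X + \beta_X = 1 - C_X$ and $u_X - v_X = (2p_X-1)(\alpha_X - \beta_X) = (2p_X-1)D_X$, together with their analogues for $Y$. These four relations are precisely the quantities that appear on the right-hand sides of (\ref{recurrence_for_C}) and (\ref{recurrence_for_D}), so once they are in hand the derivation is essentially a change of variables.

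To finish (\ref{recurrence_for_D}) I would write $D_Z = (u_X u_Y - v_X v_Y) + (u_X - v_X)C_Y + C_X(u_Y - v_Y)$ and expand the product difference through the polarization identity $u_X u_Y - v_X v_Y = \tfrac12[(u_X + v_X)(u_Y - v_Y) + (u_X - v_X)(u_Y + v_Y)]$; substituting the four relations above and collecting the coefficients of $(2p_X-1)D_X$ and $(2p_Y-1)D_Y$ produces the factors $\tfrac12(1+C_Y)$ and $\tfrac12(1+C_X)$ as claimed. For (\ref{recurrence_for_C}) I would instead compute $\alpha_Z + \beta_Z = (u_X u_Y + v_X v_Y) + (1-C_X)C_Y + C_X(1-C_Y)$, expand $u_X u_Y + v_X v_Y = \tfrac12[(u_X + v_X)(u_Y + v_Y) + (u_X - v_X)(u_Y - v_Y)]$, and subtract from $1$.

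I do not anticipate a genuine obstacle, since the claim is an algebraic identity that cannot fail once the substitution is correctly set up; the base case (\ref{Value_for_leaf}) is immediate from the definitions of $\alpha$ and $\beta$ at a leaf. The only real hazard is arithmetic bookkeeping, in particular verifying that the $C_X C_Y$ contributions coming from the expansion of $u_X u_Y + v_X v_Y$ and from the cross terms $(1-C_X)C_Y + C_X(1-C_Y)$ combine to exactly $3 C_X C_Y$ after the final subtraction, and that the overall factor $\tfrac12$ emerges correctly. Keeping the computation organized around the sum and difference variables is what prevents these slips.
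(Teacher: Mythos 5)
Your proposal is correct, and it is essentially the paper's own argument: a direct algebraic verification from the Maddison recurrences (\ref{formula_alpha}) and (\ref{formula_beta}), organized through the symmetric and antisymmetric combinations of the quantities involved --- indeed the paper's appendix identities $\alpha_X\alpha_Y+\beta_X\beta_Y=\frac{1}{2}\left[(1-C_X)(1-C_Y)+D_XD_Y\right]$ and $\beta_X\alpha_Y+\alpha_X\beta_Y=\frac{1}{2}\left[(1-C_X)(1-C_Y)-D_XD_Y\right]$ are exactly your polarization trick applied to $(\alpha,\beta)$ rather than to $(u,v)$. The only (cosmetic) difference is in (\ref{recurrence_for_D}): the paper first derives the two asymmetric forms $D_Z=(2p_X-1)(1-\beta_Y)D_X+(2p_Y-1)(1-\alpha_X)D_Y$ and its mirror image and then averages them, whereas your uniform use of $u_Xu_Y-v_Xv_Y=\frac{1}{2}\left[(u_X+v_X)(u_Y-v_Y)+(u_X-v_X)(u_Y+v_Y)\right]$ reaches the symmetric form in one step.
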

\begin{proof} These two relations can be verified by
 using (\ref{formula_alpha}) and (\ref{formula_beta}).
The details can be found in Appendix.
\end{proof}

As the first application of this theorem, we obtain the following
fact. This result can be found in \cite{Steel_Thesis_89}. Here we
give a short proof.

\begin{corollary}
\label{C_bound}
 For any phylogenetic tree $T$ with root $r$ in which the conservation
probability is at least $\frac{1}{2}$,
$\Pr[{0, 1}|s]=C_r \leq \frac{1}{2}$ for $s=0,1$.
\end{corollary}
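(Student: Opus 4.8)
The plan is to prove the statement by structural induction on the subtree rooted at each node, driving the recurrences of Theorem~\ref{mainThm}. The key is to choose the right \emph{joint} invariant: for every node $Z$ of $T$ I would show simultaneously that $0 \le C_Z \le \frac{1}{2}$ and $D_Z \ge 0$. Carrying $D_Z \ge 0$ alongside the bound on $C_Z$ is not optional, since the two recurrences are coupled and the sign of the $D$-term in (\ref{recurrence_for_C}) is exactly what allows $C_Z$ to be pushed below $\frac{1}{2}$.

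For the base case, a leaf $Z$ has $C_Z = 0$ and $D_Z = 1$ by (\ref{Value_for_leaf}), so both parts of the invariant hold. For the inductive step, let $Z$ be internal with children $X$ and $Y$, and assume the invariant at $X$ and $Y$. Because every conservation probability is at least $\frac{1}{2}$, we have $2p_X - 1 \ge 0$ and $2p_Y - 1 \ge 0$. Then $D_Z \ge 0$ follows at once from (\ref{recurrence_for_D}), since every factor on its right-hand side, namely $2p_X-1$, $2p_Y-1$, $1+C_X$, $1+C_Y$, $D_X$, and $D_Y$, is non-negative under the hypothesis. The lower bound $C_Z \ge 0$ is immediate from the fact that $C_Z = \Pr[\{0,1\}|s]$ is a probability.

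The crux is the upper bound $C_Z \le \frac{1}{2}$. Rearranging (\ref{recurrence_for_C}), this is equivalent to
$$3 C_X C_Y - C_X - C_Y \le (2p_X-1)(2p_Y-1) D_X D_Y.$$
The right-hand side is non-negative by the induction hypothesis ($D_X, D_Y \ge 0$) together with $p_X, p_Y \ge \frac{1}{2}$, so it suffices to show the left-hand side is at most $0$. I would establish this via the elementary sub-lemma that $3ab - a - b \le 0$ whenever $a, b \in [0, \frac{1}{2}]$: the expression is linear in each variable separately, hence attains its maximum over the square at a vertex, and the four vertices $(0,0)$, $(\frac{1}{2},0)$, $(0,\frac{1}{2})$, $(\frac{1}{2},\frac{1}{2})$ give the values $0$, $-\frac{1}{2}$, $-\frac{1}{2}$, $-\frac{1}{4}$, all $\le 0$. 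Applying this with $a = C_X$ and $b = C_Y$ closes the induction, and setting $Z = r$ gives $C_r \le \frac{1}{2}$.

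I expect the only genuine obstacle to be recognizing that the induction does not close on $C_Z \le \frac{1}{2}$ by itself; one must maintain $D_Z \ge 0$ in tandem so that the $(2p_X-1)(2p_Y-1)D_X D_Y$ term in (\ref{recurrence_for_C}) can be dropped with the correct sign. Once this joint invariant is identified, the remaining work reduces to the short vertex check above.
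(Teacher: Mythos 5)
Your proof is correct and takes essentially the same route as the paper: induction on the tree via the recurrences of Theorem~\ref{mainThm}, with the term $(2p_X-1)(2p_Y-1)D_XD_Y$ dropped from (\ref{recurrence_for_C}) using its sign. The only differences are that the paper closes the estimate by rewriting $1 - C_X - C_Y + 3C_XC_Y = \frac{2}{3} + 3\left(C_X-\frac{1}{3}\right)\left(C_Y-\frac{1}{3}\right)$ and bounding $\left|C-\frac{1}{3}\right|\le \frac{1}{3}$, whereas you check the vertices of the bilinear form $3ab-a-b$ on $\left[0,\frac{1}{2}\right]^2$; and you explicitly carry $D_Z\ge 0$ as part of the inductive invariant, a hypothesis the paper's proof uses silently when it discards the $D_XD_Y$ term, so your version is if anything slightly more careful on that point.
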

\begin{proof} We prove the fact by induction on $n$, the number of nodes of $T$.
For $n=0$, the fact follows from (\ref{Value_for_leaf}). Suppose
$C_r \leq \frac{1}{2}$ for any tree with less than $n$ nodes. Now, consider a
phylogenetic tree
$T$ of  $n$ nodes. Let the root $r$ of $T$  have children $X$ and $Y$.
Then, by induction, $0\leq C_X, C_Y \leq \frac{1}{2}$.
Since $p_X, p_Y\geq 1/2$, by  Formula~(\ref{recurrence_for_C}),
{\small
   \begin{eqnarray*}
   C_r& = &({1}/{2})[ {2}/{3} +
3\left(C_X-1/3\right)\left(C_Y- 1/3\right)\\
      & & -(2p_X-1)(2p_Y-1)D_XD_Y]\\
     & \leq & (1/2)\left[ 2/3 + 3 \left|C_X-1/3\right|
\times \left|C_Y-1/3\right|\right]\\
     & \leq & (1/2)\left[ 2/3 + 3 \times (1/3)^2\right]\\
     & = & 1/2.
 \end{eqnarray*}
}
Hence, the fact holds.
\end{proof}

\section{Accuracy on complete binary trees}

In this section, we study the reconstruction accuracy of the Fitch method on
the complete binary trees.
Let $T_n$ be the complete binary tree of $2^n$ leaves in which the conservation
 probability is $p$ along each branch.  Let $r$  denote the root of $T_n$ and
 $C_n(p)=C_r$ and $D_n(p) =D_r$ in $T_n$.   Since the subtree rooted at
each child of the root in $T_n$ is the complete binary tree of $2^{n-1}$
leaves, (\ref{recurrence_for_C}) and (\ref{recurrence_for_D}) imply that,
for $n\geq 1$,

{\small
\begin{eqnarray}
 2C_n (p)&\hspace*{-1em}=&\hspace*{-1em}1- 2C_{n-1}(p) + 3 C_{n-1}^2(p) -(2p-1)^2 D_{n-1}^2(p), \nonumber \\
 D_n (p)&\hspace*{-1em}=&\hspace*{-1em}(2p-1) \left(1 + C_{n-1}(p)\right)D_{n-1} (p),
 \label{recurrence_2}
\end{eqnarray}
}
where $0\leq p\leq 1$.

\begin{lemma} \label{reduction}
For any $n\ge 1$ and $0\leq p\leq 1$,
$$C_n(p) = C_n (1-p),~~|D_n(p)| = |D_n(1-p)|$$.
\end{lemma}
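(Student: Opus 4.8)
The plan is to prove both identities simultaneously by induction on $n$, exploiting the symmetry $p \mapsto 1-p$ in the recurrence~(\ref{recurrence_2}). The key observation is that the substitution $p \to 1-p$ sends $2p-1$ to $-(2p-1)$, so the factor $(2p-1)^2$ appearing in the $C_n$-recurrence is invariant, whereas the factor $(2p-1)$ in the $D_n$-recurrence merely flips sign. This suggests the correct inductive hypothesis is the pair $C_n(p)=C_n(1-p)$ together with $D_n(p) = -\,D_n(1-p)$, a sign-precise version of the claimed $|D_n(p)|=|D_n(1-p)|$; proving the signed statement is actually cleaner because it is exactly what the recurrence propagates.

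First I would check the base case. For $n=1$, the root's two children are leaves, so by~(\ref{Value_for_leaf}) we have $C_0=0$ and $D_0=1$; substituting into~(\ref{recurrence_2}) gives $C_1(p)=\tfrac12\bigl(1-(2p-1)^2\bigr)$ and $D_1(p)=2p-1$. Replacing $p$ by $1-p$ leaves $(2p-1)^2$ unchanged and negates $2p-1$, so indeed $C_1(p)=C_1(1-p)$ and $D_1(p)=-D_1(1-p)$, confirming the base case in its signed form.

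For the inductive step, assume $C_{n-1}(p)=C_{n-1}(1-p)$ and $D_{n-1}(p)=-D_{n-1}(1-p)$. In the $C_n$-recurrence, every occurrence of $C_{n-1}$ is invariant under $p\to 1-p$ by hypothesis, while $D_{n-1}^2$ is invariant because squaring absorbs the sign flip, and $(2p-1)^2$ is invariant as noted; hence $C_n(p)=C_n(1-p)$. In the $D_n$-recurrence, the prefactor $2p-1$ flips sign, the factor $1+C_{n-1}(p)$ is invariant, and $D_{n-1}(p)$ flips sign by hypothesis; the two sign flips combine to give $D_n(p)=(-1)(-1)\cdot(\text{same})$—wait, one must track this carefully: $D_n(1-p) = (2(1-p)-1)(1+C_{n-1}(1-p))D_{n-1}(1-p) = -(2p-1)(1+C_{n-1}(p))(-D_{n-1}(p)) = (2p-1)(1+C_{n-1}(p))D_{n-1}(p) = D_n(p)$.

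At this point I notice the two sign flips cancel, yielding $D_n(p)=D_n(1-p)$ rather than a negation, so for $n\ge 2$ the function $D_n$ is genuinely symmetric and the absolute-value statement holds trivially; the sign flip only survives at $n=1$. The main subtlety to get right is therefore the bookkeeping of signs across the step from $n=1$ to $n=2$: because the $D$-recurrence multiplies by $2p-1$ at each level, the parity of sign flips matters, and I would state the cleanest uniform conclusion as $|D_n(p)|=|D_n(1-p)|$ for all $n\ge 1$ (which absorbs both cases) and $C_n(p)=C_n(1-p)$, exactly as the lemma asserts. No delicate estimate is needed—the whole argument is an algebraic symmetry verification—so the only place to stumble is mismanaging the sign in the $D$-recurrence, which is why carrying the signed hypothesis $D_{n-1}(p)=-D_{n-1}(1-p)$ at the lowest level (or equivalently tracking $|D_{n-1}|$) is the safe route.
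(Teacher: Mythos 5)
Your overall strategy---induction on $n$ exploiting the symmetry $p\mapsto 1-p$ in the recurrences (\ref{recurrence_2})---is the same as the paper's, and your treatment of $C_n$ is correct. But your sign bookkeeping for $D_n$ contains a genuine error. You correctly compute that if $D_{n-1}(p)=-D_{n-1}(1-p)$ then the two sign flips cancel and $D_n(p)=D_n(1-p)$; but you then conclude that for all $n\ge 2$ the function $D_n$ is genuinely symmetric and the sign flip ``only survives at $n=1$.'' That is false: once $D_n$ is symmetric, the next application of the recurrence contributes only the single sign flip from the prefactor $2p-1$, so $D_{n+1}(p)=-D_{n+1}(1-p)$ again. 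The parity alternates at every level; the correct signed law is $D_n(p)=(-1)^n D_n(1-p)$, as one can see from unrolling the recurrence to $D_n(p)=(2p-1)^n\prod_{k=0}^{n-1}\left(1+C_k(p)\right)$ once each $C_k$ is known to be symmetric. In particular $D_3, D_5,\dots$ are antisymmetric, and the paper later relies on exactly this alternation when proving divergence for $p<\frac{1}{8}$ (``$D_n>0$ for even integers $n$ and $D_n<0$ for odd integers $n$''). As written, your induction does not close: the signed hypothesis $D_{n-1}(p)=-D_{n-1}(1-p)$ that you propose to carry is destroyed after one step and, contrary to your claim, is restored one step later rather than never.

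The damage is repairable in either of two ways. Either carry the parity-dependent hypothesis $D_{n-1}(p)=(-1)^{n-1}D_{n-1}(1-p)$, whose inductive step is the one-line computation $D_n(1-p)=-(2p-1)\left(1+C_{n-1}(p)\right)(-1)^{n-1}D_{n-1}(p)=(-1)^n D_n(p)$; or---what the paper actually does---induct directly on the unsigned quantity, using $|D_n(p)|=|2p-1|\cdot\left(1+C_{n-1}(p)\right)\cdot|D_{n-1}(p)|$ together with $C_n(p)=C_n(1-p)$, with base case $n=0$ where $C_0=0$ and $D_0=1$ are independent of $p$ by (\ref{Value_for_leaf}). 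Your closing remark that tracking $|D_{n-1}|$ is ``the safe route'' is precisely the paper's proof, but the argument you actually wrote asserts a false symmetry for all $n\ge 2$ and so does not constitute a valid proof as it stands.
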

\begin{proof} We prove by induction on $n$. For $n=0$,
the facts follow from Formula~(\ref{Value_for_leaf}).

 Suppose now the lemma is
true for $n-1$; that is, $C_{n-1} (p) = C_{n-1} (1-p)$ and
$|D_{n-1}(p)| = |D_{n-1}(1-p)| $. Then
{\small
$$\begin{array}{ll}
& 2C_n (p)\\
  = &  1- 2C_{n-1}(p) + 3 C_{n-1}^2(p) -(2p-1)^2 D_{n-1}^2(p) \\
= & 1- 2C_{n-1}(1-p) + 3 C_{n-1}^2(1-p) \\
 & -\left(2(1-p)-1\right)^2 D_{n-1}^2(1-p)\\
= & 2C_n(1-p)
\end{array}$$
}
and
{\small
$$ \begin{array}{ll}
& |D_n (p) |\\
 = & |2p-1| \cdot \left(1 + C_{n-1}(p)\right) \cdot | D_{n-1} (p)| \\
= & |2(1-p)-1| \cdot \left(1 + C_{n-1}(1-p)\right)\cdot | D_{n-1} (1-p)|\\
 = & |D_n (1-p)|,
\end{array}$$
}
from which Lemma~\ref{reduction} follows by induction.
\end{proof}

By Lemma~\ref{reduction}, we have
$$ \lim_{n \r \infty} C_n(p) =
\lim_{n \r \infty} C_n(1-p) $$
and
$$ \lim_{n \r \infty} |D_n(p)|
= \lim_{n \r \infty} |D_n(1-p)|,$$
 if all the above limits exist.
Therefore, it suffices to assume that $1/2 \le p \le 1$. Now we
simplify our notations by dropping $p$ from two equalities in
(\ref{recurrence_2}), resulting in

\begin{eqnarray}
2C_n  = 1- 2C_{n-1}  + 3 C_{n-1}^2  -(2p-1)^2 D_{n-1}^2 , \label{recurrence_1_simple}\\
D_n   = (2p-1) (1 + C_{n-1} )D_{n-1}.  \label{recurrence_2_simple}
\end{eqnarray}


\begin{lemma}
\label{Bound_on_D}
For any $n\geq 1$,
$$0\leq C_n\leq \frac{1}{2},~~0\leq D_n \leq 1.$$
\end{lemma}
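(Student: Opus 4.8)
The plan is to prove the statement by induction on $n$, but a direct induction on the two inequalities $0\le C_n\le\frac12$ and $0\le D_n\le 1$ will not close: the recurrence (\ref{recurrence_2_simple}) gives $D_n=(2p-1)(1+C_{n-1})D_{n-1}$, and bounding the factor $1+C_{n-1}$ crudely by $\frac32$ only yields $D_n\le\frac32$. The remedy is to strengthen the inductive hypothesis by carrying along the coupling $C_n+D_n\le 1$ (equivalently $D_n\le 1-C_n$). This is natural: since $C_n+D_n=(1-\alpha_n-\beta_n)+(\alpha_n-\beta_n)=1-2\beta_n$, the inequality $C_n+D_n\le 1$ is exactly $\beta_n\ge 0$, so it must hold and is the right auxiliary invariant. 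Accordingly I would prove jointly, for all $n\ge 0$, the four assertions $C_n\ge 0$, $C_n\le\frac12$, $D_n\ge 0$, and $C_n+D_n\le 1$; the desired $D_n\le 1$ then follows from the last two.

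The base case is $n=0$, where the leaf values (\ref{Value_for_leaf}) give $C_0=0$ and $D_0=1$, so all four assertions hold (with $C_0+D_0=1$). For the inductive step, write $c=C_{n-1}$, $d=D_{n-1}$ and $t=2p-1\in[0,1]$, and assume $0\le c\le\frac12$, $0\le d$ and $c+d\le 1$. The sign and the two easy $C$-bounds are quick. First, $D_n=t(1+c)d\ge 0$ since every factor is nonnegative. Second, by (\ref{recurrence_1_simple}), $2C_n=1-2c+3c^2-t^2d^2\le 1-2c+3c^2\le 1$, because $c(3c-2)\le 0$ for $0\le c\le\frac12$ (this is also Corollary~\ref{C_bound}); and $2C_n\ge(1-2c+3c^2)-(1-c)^2=2c^2\ge 0$, using $t^2d^2\le d^2\le(1-c)^2$, which follows from $d\le 1-c$ and $t\le 1$.

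The one step that carries the real content is the coupling $C_n+D_n\le 1$, and the cleanest route I see is a factorization. Substituting the two recurrences and setting $u=(2p-1)D_{n-1}$, one finds
\[
2\,(1-C_n-D_n)=u^2-2(1+c)u+(1+2c-3c^2)=(u-1-3c)(u-1+c).
\]
Under the inductive hypothesis both factors are nonpositive: $u=td\le d\le 1-c$ gives $u-1+c\le 0$, while $u\le 1-c\le 1\le 1+3c$ gives $u-1-3c\le 0$. Hence the right-hand side is nonnegative, i.e.\ $C_n+D_n\le 1$, which closes the induction; combined with $D_n\ge 0$ this gives $0\le D_n\le 1-C_n\le 1$. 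I expect the main obstacle to be discovering this factorization—and thereby recognizing that $C_n+D_n\le 1$ is the invariant that actually propagates—while verifying the expansion and the signs of the two factors is then routine.
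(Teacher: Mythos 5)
Your proof is correct, but it takes a genuinely different route from the paper's. The paper disposes of the lemma in two lines: $C_n\le\frac12$ is quoted from Corollary~\ref{C_bound} (itself a short induction on the $C$-recurrence alone), and the bounds on $D_n$ are declared trivial --- which they are, once one remembers the probabilistic meaning of the quantities: $C_n=\Pr_r[\{0,1\}\,|\,s]\ge 0$ and $D_n=\alpha_r-\beta_r$ with $\alpha_r,\beta_r$ probabilities satisfying $\alpha_r+\beta_r\le 1$, so $D_n\le\alpha_r\le 1$ immediately, while $D_n\ge 0$ follows from the recurrence $D_n=(2p-1)(1+C_{n-1})D_{n-1}$ as a product of nonnegative factors. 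Your auxiliary invariant $C_n+D_n\le 1$ is, as you yourself observe, exactly the statement $\beta_n\ge 0$ --- which the paper gets for free from the semantics, since $\beta_n$ is by definition a probability. So your contribution is to re-derive this fact purely from the recurrences (\ref{recurrence_1_simple})--(\ref{recurrence_2_simple}) and the initial values (\ref{Value_for_leaf}), via the factorization $2(1-C_n-D_n)=(u-1-3c)(u-1+c)$ with $u=(2p-1)D_{n-1}$, which checks out (both factors are nonpositive under the hypothesis $u\le d\le 1-c$), as do your bounds $2c^2\le 2C_n\le 1-2c+3c^2\le 1$. What your approach buys is a self-contained algebraic argument that treats the recurrences as the definition and never appeals to the probabilistic interpretation --- and as a bonus it proves $C_n\ge 0$ and the sharper coupling $D_n\le 1-C_n$, neither of which the paper states explicitly at this point; your diagnosis that a naive two-variable induction stalls at $D_n\le\frac32$ is also accurate \emph{within that algebraic framing}. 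What the paper's approach buys is brevity: by keeping the identification $C=1-\alpha-\beta$, $D=\alpha-\beta$ in view, there is simply nothing to prove for $D_n\le 1$, and no clever factorization needs to be discovered.
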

\begin{proof} Since we assume $1/2\leq p\leq1$,
the first fact is from Corollary~\ref{C_bound}. The second
fact is trivial.
\end{proof}

\begin{lemma} \label{smallC}
Let  $n \ge 1$.
If  $C_{n-1} \le \frac{1}{3}$, then $ C_n \le \frac{1}{3}$.
\end{lemma}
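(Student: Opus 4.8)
The engine of the proof is to complete the square in the $C$-recurrence (\ref{recurrence_1_simple}). Using $1-2x+3x^2=\frac23+3\left(x-\frac13\right)^2$ with $x=C_{n-1}$, it becomes
\begin{equation*}
2C_n=\frac23+3\left(C_{n-1}-\tfrac13\right)^2-(2p-1)^2D_{n-1}^2 .
\end{equation*}
Thus $C_n\le\frac13$ is \emph{exactly} the inequality $(2p-1)^2D_{n-1}^2\ge 3\left(\tfrac13-C_{n-1}\right)^2$. Since $p\ge\frac12$ and $0\le D_{n-1}$ by Lemma~\ref{Bound_on_D}, and since $C_{n-1}\le\frac13$ makes $\frac13-C_{n-1}\ge0$, I may take (nonnegative) square roots and rewrite the target as
\begin{equation*}
\sqrt3\,(2p-1)D_{n-1}+3C_{n-1}\ \ge\ 1 .
\end{equation*}
So the whole lemma is equivalent to a lower bound on the combined quantity $\Phi_{n-1}:=\sqrt3\,(2p-1)D_{n-1}+3C_{n-1}$.

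The point this reformulation makes clear is that $C_{n-1}\le\frac13$ by itself cannot suffice: one also needs $D_{n-1}$ to be not too small. (Indeed, for $p=\frac12$ one has $C_0=0\le\frac13$ but $C_1=\frac12$, so the statement is false when the signal $D$ is too weak.) My plan is therefore to prove the companion bound $\Phi_m\ge1$ together with $C_m\le\frac13$ by a single simultaneous induction on $m$, using both recurrences (\ref{recurrence_1_simple}) and (\ref{recurrence_2_simple}); the lemma then follows from the equivalence above. The base case $m=0$ reads $\sqrt3\,(2p-1)\ge1$, i.e.\ $p\ge\frac12\!\left(1+\frac{1}{\sqrt3}\right)$, which is where the large-$p$ regime of this section (in particular $p\ge\frac78$) is needed, and below which the lemma genuinely fails.

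For the inductive step I would substitute the two recurrences into $\Phi_m$ and, abbreviating $u=C_{m-1}$ and $v=(2p-1)D_{m-1}$, reduce $\Phi_m\ge1$ to a single polynomial inequality,
\begin{equation*}
\sqrt3\,(2p-1)(1+u)\,v+\tfrac32\!\left(1-2u+3u^2\right)-\tfrac32 v^2\ \ge\ 1 ,
\end{equation*}
to be checked over the region $0\le u\le\frac13$, $\ v\ge\frac{1-3u}{\sqrt3}$ (the inductive hypothesis $\Phi_{m-1}\ge1$), and $0\le v\le 2p-1$ (from $D_{m-1}\le1$ in Lemma~\ref{Bound_on_D}). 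I expect this to be the crux. As a function of $v$ the left side is a downward parabola, so the inductive lower bound on $v$ is not by itself usable; the natural way is to invoke concavity and verify the inequality only at the two endpoints $v=\frac{1-3u}{\sqrt3}$ and $v=2p-1$, each of which should collapse to an elementary estimate in $u$. Establishing that endpoint inequality uniformly in $u\in[0,\frac13]$ is the one nontrivial computation; once it is in hand, concavity fills in the interval, the induction closes, and the first paragraph converts $\Phi_{n-1}\ge1$ back into $C_n\le\frac13$.
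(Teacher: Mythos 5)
Your opening reformulation is correct and is in fact the paper's own starting point: completing the square gives exactly the paper's identity (\ref{another_form}), and under the hypothesis $C_{n-1}\le\frac13$ (together with $D_{n-1}\ge 0$ from Lemma~\ref{Bound_on_D}) the conclusion $C_n\le\frac13$ is indeed equivalent to $\Phi_{n-1}:=\sqrt3\,(2p-1)D_{n-1}+3C_{n-1}\ge 1$. The fatal problem is how you propose to obtain this bound: a simultaneous induction launched at $m=0$, whose base case $\sqrt3\,(2p-1)\ge1$ forces $p\ge\frac12\left(1+\frac1{\sqrt3}\right)\approx 0.789$. But the lemma is invoked in the paper in the \emph{complementary} regime: it is used in Case~2 of Theorem~\ref{Thm41}, i.e.\ for $\frac12\le p<\frac78$ after the symmetry reduction, while for $p\ge\frac78$ the paper switches to the $c_n,d_n$ analysis of Theorem~\ref{Thm42} and never needs Lemma~\ref{smallC}. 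So even granting your inductive step, your argument proves the statement only where it is not needed and leaves it unproven throughout the regime of application. Relatedly, your assertion that the lemma ``genuinely fails'' below that threshold is wrong for $n\ge2$: your counterexample ($p=\frac12$, $C_0=0$, $C_1=\frac12$) lives only at $n=1$. That is a legitimate quibble with the paper's ``$n\ge1$'' --- its own proof tacitly assumes $n\ge2$, since it invokes $C_{n-2}$ and $D_{n-2}$ --- but for $n\ge2$ the conditional statement is true for \emph{every} $p\in[\frac12,1]$, and indeed for all $n\ge N+1$ once $C_N<\frac13$ with $N\ge1$, which is all Theorem~\ref{Thm41} uses.

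The idea you are missing is that no invariant needs to be propagated from the base: the hypothesis $C_{n-1}\le\frac13$ \emph{by itself} implies $\Phi_{n-1}\ge1$, once you apply the same completed-square identity one level down. That is exactly the paper's proof: from (\ref{another_form}) at index $n-1$, $0\le 2\left(\frac13-C_{n-1}\right)=(2p-1)^2D_{n-2}^2-3\left(\frac13-C_{n-2}\right)^2\le(2p-1)^2D_{n-2}^2$; squaring gives $3\left(\frac13-C_{n-1}\right)^2\le\frac34(2p-1)^4D_{n-2}^4$, while (\ref{recurrence_2_simple}) together with $C_{n-2}\ge0$ and $D_{n-2}\le1$ gives $(2p-1)^2D_{n-1}^2=(2p-1)^4(1+C_{n-2})^2D_{n-2}^2\ge(2p-1)^4D_{n-2}^4$, so the required inequality $(2p-1)^2D_{n-1}^2\ge 3\left(\frac13-C_{n-1}\right)^2$ holds with no restriction on $p$ --- a two-step use of the recurrence in place of your global induction. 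For what it is worth, your inductive step does close for all $p\ge\frac12$ (both endpoint evaluations of your concave parabola are nonnegative, the left endpoint giving $3(2p-1)(1+u)\left(\frac13-u\right)\ge0$ and the right giving $(2p-1)^2\left[\sqrt3(1+u)-\frac32\right]+\frac92\left(\frac13-u\right)^2\ge0$ since $\sqrt3>\frac32$), so the base case is the only obstruction; but the unconditional conclusion $C_n\le\frac13$ for all $n$ that your scheme would deliver is genuinely false at, say, $p=0.6$, where $C_1=2p(1-p)=0.48$, confirming that a base-anchored induction cannot be repaired in the regime where the lemma is actually needed.
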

\begin{proof} We rewrite  Formula (\ref{recurrence_1_simple})  as
{\small
\begin{eqnarray}
\label{another_form}
2\left( \frac{1}{3}-C_n\right) +  3\left( \frac 1 3 -C_{n-1}\right)^2 -(2p-1)^2
D_{n-1}^2=0.
\end{eqnarray}
}
This implies that
{\small
$$0 \le 2\left(\frac 13 -C_{n-1}\right)
 \le  (2p-1)^2 D_{n-2}^2,$$
}
and
$$4\left(\frac 13 -C_{n-1}\right)^2 \le (2p-1)^4 D_{n-2}^4.$$
By Lemma~\ref{Bound_on_D}, we have that
{\small
$$\begin{array}{rcl}
2C_n & =&\frac  2 3 +  3\left( \frac 1 3 -C_{n-1}\right)^2 -(2p-1)^2 D_{n-1}^2 \\
& \le & \frac  2 3 +  \frac 34  (2p-1)^4 D_{n-2}^4\\
&  & -(2p-1)^2 \left[(2p-1)(1+C_{n-2}) D_{n-2}\right]^2 \\
& = &\frac 2 3 + (2p-1)^4 \left ( \frac 34 D_{n-2}^2 - ( 1+ C_{n-2})^2 \right )D_{n-2}^2 \\
& \le & \frac 2 3 + (2p-1)^4 \left ( \frac 34  - ( 1+ 0)^2 \right)D_{n-2}^2 \\
& \le & \frac 23.
\end{array}
$$
}
and hence Lemma~\ref{smallC} follows.
\end{proof}

\begin{lemma} \label{largeC}
Let  $n \ge 1$.
If $C_{n-1} \ge \frac{1}{3}$, then $C_n \le C_{n-1}$.
\end{lemma}
\begin{proof}
{\small
$$\begin{array}{lcl}
2C_n &\hspace*{-1em}=&\hspace*{-1em}1- 2C_{n-1} + 3 C_{n-1}^2 -(2p-1)^2 D_{n-1}^2\\
&\hspace*{-1em}= &\hspace*{-1em}2 C_{n-1} +(1-C_{n-1})(1-3C_{n-1}) -(2p-1)^2 D_{n-1}^2\\
& \hspace*{-1em}\le &\hspace*{-1em}2C_{n-1}.
\end{array}$$
}
\end{proof}

\begin{theorem}
\label{Thm41}
Suppose $\frac{1}{8} \le p < \frac{7}{8}$. Then
$$\lim_{n \r \infty} C_n = \frac 13,~~
 \lim_{n \r \infty} D_n =0.$$
\end{theorem}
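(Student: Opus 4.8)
The plan is to determine the asymptotics of $C_n$ first, read off those of $D_n$ from the recurrence (\ref{recurrence_2_simple}), and then feed this back to pin down $\lim C_n$. Throughout I use that, by Lemma~\ref{reduction}, it suffices to treat $\frac12\le p<\frac78$, so that $0\le 2p-1<\frac34$ and the quantity $\rho:=\frac43(2p-1)$ satisfies $0\le\rho<1$. This $\rho$ is exactly the contraction factor that drives the argument: since the multiplier in (\ref{recurrence_2_simple}) is $(2p-1)(1+C_{n-1})$ and $C_n$ will turn out to hover near $\frac13$, the multiplier is asymptotically $\frac43(2p-1)$, and the hypothesis $p<\frac78$ is precisely what makes this $<1$. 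I expect the genuine content of the theorem to sit at this threshold.

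First I would set up a dichotomy from Lemmas~\ref{smallC} and~\ref{largeC}. By Lemma~\ref{smallC} the region $\{C_n\le\frac13\}$ is absorbing, while by Lemma~\ref{largeC} the sequence cannot increase once it is at least $\frac13$. Hence one of two situations must occur: either (i) $C_n\ge\frac13$ for all $n\ge1$, or (ii) there is an index $N$ with $C_n\le\frac13$ for all $n\ge N$ (take $N$ to be the first $m\ge1$ with $C_m\le\frac13$ and apply Lemma~\ref{smallC} to the steps $m+1,m+2,\dots$).

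Case (ii) is the easy one. For $n>N$ the estimate $1+C_{n-1}\le\frac43$ holds, so (\ref{recurrence_2_simple}) gives $0\le D_n\le\rho\,D_{n-1}$, and iterating together with $D_N\le1$ yields $D_n\le\rho^{\,n-N}\to0$. To obtain $\lim C_n$ I would use the reformulation (\ref{another_form}), namely $2(\frac13-C_n)=(2p-1)^2D_{n-1}^2-3(\frac13-C_{n-1})^2$. Setting $e_n=\frac13-C_n$, which is $\ge0$ here, and discarding the nonpositive term $-3e_{n-1}^2$, this gives $0\le e_n\le\frac12(2p-1)^2D_{n-1}^2\to0$, so $C_n\to\frac13$ and $D_n\to0$.

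Case (i) is where the work lies, and is the main obstacle: one must rule out a limit strictly above $\frac13$. Here Lemma~\ref{largeC} makes $(C_n)_{n\ge1}$ nonincreasing and bounded below by $\frac13$, so $C_n\to L$ for some $L\ge\frac13$. Refining the computation in the proof of Lemma~\ref{largeC} (where $(1-C_{n-1})(1-3C_{n-1})\le0$ once $C_{n-1}\ge\frac13$) gives the sharper inequality $C_{n-1}-C_n\ge\frac12(2p-1)^2D_{n-1}^2\ge0$ for $n\ge2$; summing this telescoping inequality yields $\sum_{n\ge2}(2p-1)^2D_{n-1}^2\le 2(C_1-L)<\infty$ by Lemma~\ref{Bound_on_D}, whence $D_n\to0$ (for $p=\frac12$ this is immediate from (\ref{recurrence_2_simple})). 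Letting $n\to\infty$ in (\ref{another_form}) and writing $m=L-\frac13\in[0,\frac16]$ then forces $2m=3m^2$, so $m(3m-2)=0$ and therefore $m=0$ since $m<\frac23$. Thus $L=\frac13$, and again $C_n\to\frac13$, $D_n\to0$, which completes both cases.
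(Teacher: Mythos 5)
Your proof is correct, and its skeleton is the paper's: the same dichotomy driven by Lemmas~\ref{smallC} and~\ref{largeC}, and your Case~(ii) is the paper's Case~2 essentially verbatim (geometric decay of $D_n$ with ratio $\frac{4}{3}(2p-1)<1$, then the sandwich argument via (\ref{another_form})). The genuine difference is in Case~(i). There the paper argues qualitatively: $C_n\to L$ by monotonicity, then (\ref{recurrence_1_simple}) forces $(2p-1)^2D_{n-1}^2$ to converge, and passing to the limit in (\ref{recurrence_2_simple}) is said to give $\lim_{n\r\infty} D_n=0$ ``since $\lim_{n\r\infty} C_n\ge 1/3$''. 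That step is terser than it looks: for $\frac{5}{6}\le p<\frac{7}{8}$ the fixed-point relation $(2p-1)(1+L)=1$ is compatible with $L\in[\frac13,\frac12]$, so ruling out a nonzero limit $d$ of $D_n$ really requires the second limit equation as well, which yields $(2p-1)^2d^2=(1-L)(1-3L)\le 0$ and hence a contradiction. Your telescoping argument replaces all of this: sharpening the computation inside Lemma~\ref{largeC} to the quantitative decrease $C_{n-1}-C_n\ge\frac12(2p-1)^2D_{n-1}^2$ and summing gives $\sum_n (2p-1)^2 D_n^2<\infty$, hence $(2p-1)^2D_n^2\to 0$, with the degenerate case $p=\frac12$ disposed of directly from (\ref{recurrence_2_simple}). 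This is self-contained, avoids any a priori discussion of whether $\lim_{n\r\infty} D_n$ exists, and in effect patches the one terse spot in the paper's Case~1. Your identification of the limit ($2m=3m^2$ with $m\le\frac16$, so $m=0$) is the counterpart of the paper's exclusion of the root $c=1$ via $C_1=2p(1-p)<\frac12$; both rest on the bound $C_n\le\frac12$ from Corollary~\ref{C_bound}. One boundary quibble you inherit from the paper: the symmetry reduction sends $p=\frac18$ to $p=\frac78$, which lies outside the range $[\frac12,\frac78)$ that either proof actually treats; that endpoint is covered by Theorem~\ref{Thm42}, whose limits at $p=\frac78$ are again $\frac13$ and $0$.
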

\begin{proof}
The proof is divided into two cases.

Case 1: $C_n \ge 1/3$ for all $n$. By Lemma~\ref{largeC}, $C_n$ is a
decreasing positive sequence and thus $\lim_{n \r \infty} C_n$
exists and its value is at least $1/3$. The equality $2C_n = 1- 2C_{n-1} + 3 C_{n-1}^2 -(2p-1)^2
D_{n-1}^2$ implies that $\lim_{n \r \infty} D_n$ exists.
 Taking limits on all terms in (\ref{recurrence_2_simple}) implies that
$ \lim_{n \r \infty} D_n =0$ since $\lim_{n \r \infty} C_n \geq 1/3$.
Again, taking on all terms in (\ref{recurrence_1_simple}) gives that
$$2\lim_{n \r \infty} C_n= 1- 2\lim_{n \r \infty} C_n + 3 \left( \lim_{n \r \infty}
C_n \right )^2 -0;$$ that is, $\lim_{n \r \infty} C_n =1/3$ or $1$.
Since $C_n$ is decreasing and $C_1 = 2p(1-p) <1/2$,  $ \lim_{n \r
\infty} C_n \ne 1$. Thus $\lim_{n \r \infty} C_n =1/3$.

Case 2: $C_N < 1/3$ for some $N$. By Lemma~\ref{smallC}, $C_n \le 1/3$ for
all $n \ge N$.  Formula (\ref{recurrence_2_simple}) implies that
{\small
$$ D_n = (2p-1) (1+C_{n-1}) D_{n-1} \leq \left(\frac{4}{3}(2p-1)\right)^{n-N} D_{N-1}$$
}
for any $n\geq N$.
Since $1/2\leq p<7/8$, $\frac{4}{3}(2p-1) <1$ and hence
 $ \lim_{n \r \infty} D_n =0$.

By Formula (\ref{another_form}),
{\small
\begin{eqnarray*}
 2 \left(\frac 13 - C_n\right) = (2p-1)^2 D_{n-1}^2 - 3
\left(C_{n-1} -\frac 13 \right)^2
\end{eqnarray*}
}
and hence
{\small
\begin{eqnarray*}
 2 \left(\frac 13 - C_n\right) \le (2p-1)^2 D_{n-1}^2
\end{eqnarray*}
}
 for all $n\geq N$.
Since
$$ 0\leq 2 \left(\frac 13 - C_n\right)$$
and
$$\lim_{n \r \infty }
(2p-1)^2 D_{n-1}^2 = (2p-1)^2 \left( \lim_{n \r \infty } D_{n-1} \right )^2 =0,
$$
by the Sandwich Theorem $$\lim_{n \r \infty } 2
\left(\frac 13 - C_n\right) =0$$ and thus $\lim_{n \r \infty } C_n = 1/3.$
\end{proof}

To prove the convergence of $C_n$ and $D_n$ for $p\geq \frac{7}{8}$, we set
$$c_n = 2(1-p)/(2p-1)-C_n$$ and
$$d_n =D_n^2.$$
Then,  Formula (\ref{another_form}) implies that

$$ \begin{array}{ll}
  & 2(\frac{2(1-p)}{2p-1} - c_n)\\
  = &  \frac{2}{3} + 3 \left(\frac{1}{3}-\frac{2(1-p)}{2p-1}+c_{n-1}\right)^2
-(2p-1)^2d_{n-1}\\
  = & \frac{2}{3} + 3\left(\frac{8p-7}{3(2p-1)}+c_{n-1}\right)^2 - (2p-1)^2d_{n-1}\\
  = & \frac{2}{3}+ \frac{(8p-7)^2}{3(2p-1)^2} + \frac{2(8p-7)}{2p-1}c_{n-1}
      + 3 c_{n-1}^2 - (2p-1)^2 d_{n-1},\\
  \end{array}
$$
 or equivalently

{\small
 \begin{eqnarray}
   2c_n
  & = & (2p-1)^2d_{n-1} - \frac{2(8p-7)}{2p-1}c_{n-1} - 3c_{n-1}^2 \nonumber \\
  &   &- \frac{(8p-7)(4p-3)}{(2p-1)^2}. \label{recurrence_incase78_1}
   \end{eqnarray}
}
Formula (\ref{recurrence_2_simple}) implies that
\begin{eqnarray}
d_n &= & (2p-1)^2 \left(\frac{1}{2p-1} - c_n\right)^2 d_{n-1}\nonumber \\
     & &   = \left[ 1 - (2p-1)c_{n-1}\right]^2 d_{n-1}.  \label{recurrence_incase78_2}
\end{eqnarray}

\begin{lemma}
\label{key_lemma_incase78}
For any $k\geq 2$ and $p\geq 7/8$,

(1)  $c_{k} \geq 0$.

(2)  $d_{k+1} \leq d_k$.

(3)  $c_{k} \leq \frac{5(1-p)}{4(2p-1)}.$
\end{lemma}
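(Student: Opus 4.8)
The plan is to prove the three statements simultaneously by strong induction on $k$, exploiting the fact that once parts (1) and (3) are known at a given index, part (2) is essentially free. Throughout I write $s=2p-1\ge\frac34$ and introduce the candidate limit $d^{*}=\frac{(8p-7)(4p-3)}{(2p-1)^{4}}$, the non-zero equilibrium of (\ref{recurrence_incase78_2}). Since $(2p-1)^{2}d^{*}=\frac{(8p-7)(4p-3)}{(2p-1)^{2}}$ is exactly the constant term in (\ref{recurrence_incase78_1}), that recurrence takes the centred form
\[
2c_{k}=(2p-1)^{2}(d_{k-1}-d^{*})-\frac{2(8p-7)}{2p-1}\,c_{k-1}-3c_{k-1}^{2},
\]
and because $p\ge\frac78$ forces $8p-7\ge0$, both subtracted terms are non-negative whenever $c_{k-1}\ge0$. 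This one identity drives both bounds on $c_k$. Part (2) then needs no separate work: granting (1) and (3) at index $k$, we have $0\le (2p-1)c_{k}\le\frac{5(1-p)}{4}\le\frac{5}{32}<2$, so the factor $[1-(2p-1)c_{k}]^{2}$ in (\ref{recurrence_incase78_2}) lies in $[0,1]$, and since $d_{k}\ge0$ we get $d_{k+1}\le d_{k}$ immediately. Consequently monotonicity of $d$ is available at all smaller indices during the induction.

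For part (3), I use $c_{k-1}\ge0$ (part (1) at $k-1$) to drop the two non-positive terms in the centred identity, obtaining $c_{k}\le\frac{(2p-1)^{2}}{2}(d_{k-1}-d^{*})$. It then suffices to bound the margin $d_{k-1}-d^{*}$ from above: for $k\ge3$ monotonicity (part (2) at indices $2,\dots,k-2$) gives $d_{k-1}\le d_{2}$, so the claim reduces to the numerical inequality $d_{2}-d^{*}\le\frac{5(1-p)}{2(2p-1)^{3}}$, which I check from the closed forms $d_{1}=(2p-1)^{2}$, $(2p-1)c_{1}=2(1-p)^{2}(2p+1)$ and $d_{2}=[1-(2p-1)c_{1}]^{2}d_{1}$ derived from the leaf values $C_{0}=0,\ D_{0}=1$.

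The crux is part (1), $c_{k}\ge0$, equivalent by the centred identity to $(2p-1)^{2}(d_{k-1}-d^{*})\ge\frac{2(8p-7)}{2p-1}c_{k-1}+3c_{k-1}^{2}$. Here an upper bound on $d_{k-1}$ is useless; I need a lower bound on $\Delta_{k-1}:=d_{k-1}-d^{*}$ strong enough to dominate the (already $O(1-p)$) right-hand side. From (\ref{recurrence_incase78_2}) the margin satisfies
\[
\Delta_{k}=[1-(2p-1)c_{k-1}]^{2}\Delta_{k-1}-d^{*}(2p-1)c_{k-1}\bigl(2-(2p-1)c_{k-1}\bigr),
\]
in which the non-negative ``loss'' term is precisely what a crude box estimate $0\le c_{j}\le\frac{5(1-p)}{4(2p-1)}$ fails to absorb. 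The main obstacle is therefore to propagate a sharp lower bound on $\Delta_{k}$: I would carry the auxiliary invariant $(2p-1)^{2}\Delta_{k}\ge\frac{2(8p-7)}{2p-1}c_{k}+3c_{k}^{2}$ (which is just $c_{k+1}\ge0$ restated), feed the bound $c_{k-1}\le\frac{(2p-1)^{2}}{2}\Delta_{k-2}$ from part (3) into the loss term, and show that the contraction factor $[1-(2p-1)c_{k-1}]^{2}$ retains enough of $\Delta_{k-1}$ to cover both the loss and the next right-hand side, using $p\ge\frac78$ and the $O(1-p)$ size of the $c$'s. I expect this coupled estimate to be the only genuinely delicate computation; the base cases at $k=2$ (and the auxiliary invariant at the first indices) are again finite evaluations starting from $C_{0}=0,\ D_{0}=1$.
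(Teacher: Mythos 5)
Your overall architecture is sound and, in fact, mirrors the paper's: simultaneous induction on $k$; rewriting (\ref{recurrence_incase78_1}) around the equilibrium $d^{*}=\frac{(8p-7)(4p-3)}{(2p-1)^{4}}$; getting part (2) for free from (1) and (3) via $[1-(2p-1)c_{k}]^{2}\le 1$; and reducing part (3) to $c_{k}\le\frac{(2p-1)^{2}}{2}(d_{k-1}-d^{*})$ together with $d_{k-1}\le d_{2}$ and a one-variable polynomial check. Your target inequality $d_{2}-d^{*}\le\frac{5(1-p)}{2(2p-1)^{3}}$ is true; it is even implied by the paper's cruder route, which bounds $d_{2}<(2p-1)^{2}$ (inequality (\ref{tech_lemma})) and then estimates $(1-2q)^{6}-(1-8q)(1-4q)$ with $q=1-p$. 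The genuine gap is part (1) --- which you yourself flag as the crux. There you derive the correct margin recurrence $\Delta_{k}=[1-(2p-1)c_{k-1}]^{2}\Delta_{k-1}-d^{*}(2p-1)c_{k-1}\bigl(2-(2p-1)c_{k-1}\bigr)$, correctly observe that box bounds on the $c_{j}$ cannot absorb the loss term, and then propose to ``carry the auxiliary invariant'' $(2p-1)^{2}\Delta_{k}\ge\frac{2(8p-7)}{2p-1}c_{k}+3c_{k}^{2}$. But as you note, that invariant \emph{is} the statement $c_{k+1}\ge 0$; restating the goal as an invariant yields no leverage by itself. The entire content of the lemma is the verification that this invariant propagates through the coupled recurrences, and that verification --- the ``genuinely delicate computation'' --- is exactly what you defer. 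As written, the induction for (1) has its engine missing, and since (2) and (3) consume (1) at earlier indices, the whole lemma remains unproved.

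For comparison, the paper closes precisely this step by unrolling the recurrences two levels: it writes $2c_{n}$ in terms of $d_{n-2}$, $c_{n-2}$, $c_{n-1}$; uses $c_{n-1}\ge 0$ (your invariant one index back) to obtain the lower bound $(2p-1)^{2}d_{n-2}\ge\frac{2(8p-7)}{2p-1}c_{n-2}+3c_{n-2}^{2}+\frac{(8p-7)(4p-3)}{(2p-1)^{2}}$; verifies the sign condition (\ref{tech_lemma1}), namely $[1-(2p-1)c_{n-2}]^{2}-\frac{8p-7}{2p-1}-\frac{3}{2}c_{n-1}\ge 0$ (this is where the box bounds $0\le c_{n-2},c_{n-1}\le\frac{5(1-p)}{4(2p-1)}$ earn their keep), so that substituting the lower bound is legitimate; and then collapses $2c_{n}$ to the explicit quartic $\frac{8(8p-7)(1-p)}{2p-1}c_{n-2}+3(1-p)(9-4p)c_{n-2}^{2}+3(2p-1)^{2}c_{n-2}^{4}\ge 0$, after replacing $c_{n-2}^{3}$ by $\frac{5(1-p)}{4(2p-1)}c_{n-2}^{2}$ (valid because $4p-5<0$). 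Your margin-recurrence formulation is algebraically equivalent to this unrolling, so your plan is very likely completable along these lines --- note also that the two-step-back structure forces base cases at $k=2$ \emph{and} $k=3$, not just $k=2$ --- but until that coupled estimate is actually carried out, the proposal does not constitute a proof.
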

\begin{proof}
 We prove it by induction on $k$. The facts is obviously true for $k=2, 3$.
Assume they hold for $k\leq n-1$. We now  prove they hold for $k=n$.

(1). By induction, $0\leq c_{n-2}, c_{n-1}\leq \frac{5(1-p)}{4(2p-1)}$. Hence,
  \begin{eqnarray}
& & [1-(2p-1)c_{n-2}]^2 - \frac{8p-7}{2p-1} - \frac{3}{2}c_{n-1} \nonumber \\
& = & \frac{6(1-p)}{2p-1} - 2(2p-1)c_{n-2}- \frac{3}{2}c_{n-1} + (2p-1)^2
c^2_{n-2} \nonumber \\
& \geq &  \frac{6(1-p)}{2p-1} - \frac{8p-1}{2} \times \frac{5(1-p)}{4(2p-1)}+0
\nonumber \\
&= & \frac{1-p}{2p-1}\times \frac{53-40p}{8} \nonumber \\
&\geq &  0. \label{tech_lemma1}
\end{eqnarray}
Setting $\Delta=\frac{(8p-7)(4p-3)}{(2p-1)^2}$, we have
$$\begin{array}{ll}
  & 2c_{n}\\
  = & (2p-1)^2 d_{n-1} - \frac{2(8p-7)}{2p-1} c_{n-1} - 3c^2_{n-1} - \Delta\\
  = & (2p-1)^2 d_{n-1} - 2c_{n-1} \left(\frac{8p-7}{2p-1}
+\frac{3}{2}c_{n-1}\right ) - \Delta.
\end{array}
$$
By using recurrence  (\ref{recurrence_incase78_1}) and
(\ref{recurrence_incase78_2}), we obtain that
$$\begin{array}{ll}
  & \hspace*{-1em}2c_{n}\\
   = &\hspace*{-1em}(2p-1)^2 \left(1-(2p-1)c_{n-2}\right)^2d_{n-2}
    - [\frac{(8p-7)}{2p-1} +  \frac{3}{2}c_{n-1}]\\
   &  \times [(2p-1)^2 d_{n-2} - \frac{2(8p-7)}{2p-1} c_{n-2} - 3c^2_{n-2}-\Delta]
-\Delta\\
  = & \hspace*{-1em}(2p-1)^2 [\left(1-(2p-1)c_{n-2}\right)^2 - \frac{8p-7}{2p-1} -
\frac{3}{2}c_{n-1}]d_{n-2}\\
 & + [\frac{8p-7}{2p-1}+ \frac{3}{2}c_{n-1}]
 [ \frac{2(8p-7)}{2p-1} c_{n-2} + 3c^2_{n-2}+\Delta] -\Delta.\\
 \end{array}
$$
Since $c_{n-1}\geq 0$, Formula (\ref{recurrence_incase78_1}) implies that
$$ (2p-1)^2d_{n-2} \geq \frac{2(8p-7)}{2p-1}c_{n-2} + 3 c_{n-2}^2 + \Delta.$$
This inequality and (\ref{tech_lemma1}) implies that
$$\begin{array}{ll}
  & 2c_{n}\\
  \geq  & [(1-(2p-1)c_{n-2})^2 - \frac{8p-7}{2p-1} - \frac{3}{2}c_{n-1}]\\
     &  \times [\frac{2(8p-7)}{2p-1}c_{n-2} + 3 c_{n-2}^2 + \Delta]\\
 & + [\frac{8p-7}{2p-1}+ \frac{3}{2}c_{n-1}]
 [ \frac{2(8p-7)}{2p-1} c_{n-2} + 3c^2_{n-2}+\Delta] -\Delta\\
    = & \frac{8(8p-7)(1-p)}{2p-1}c_{n-2} +[3+(8p-7)(4p-7)]c^2_{n-2}\\
     & + 4(2p-1)(4p-5)c^3_{n-2} +3(2p-1)^2c^4_{n-2}.
\end{array}
$$
By assumption, $c_{n-2} \leq \frac{5(1-p)}{4(2p-1)}$ and $4p-5<-1$. Replacing
$c^3_{n-2}$  with $\frac{5(1-p)}{4(2p-1)}c^{2}_{n-2}$ in the right-hand side of
the last inequality,
we have that
$$\begin{array}{ll}
  & 2c_{n}\\
    \geq &
     \frac{8(8p-7)(1-p)}{2p-1}c_{n-2} +[3+(8p-7)(4p-7)]c^2_{n-2}\\
     & + 5(1-p)(4p-5)c^2_{n-2} +3(2p-1)^2c^4_{n-2}\\
     = & \frac{8(8p-7)(1-p)}{2p-1}c_{n-2} +3(1-p)(9-4p)c^2_{n-2}\\
      & +3(2p-1)^2c^4_{n-2}\\
    \geq & 0
\end{array} $$
%

  (2) We have proved that $c_{n}\geq 0$.  Therefore, \\
$d_{n+1}=[1-(2p-1)c_n]^2d_{n}\leq d_{n}$.
\vspace{1em}

 (3)  Since $d_k$ decreases for $k\leq n$,
\begin{eqnarray}
\label{tech_lemma}
   d_n \leq d_2=D_2^2< (2p-1)^2
\end{eqnarray}
  Let $q=1-p$. Note that $p\geq \frac{7}{8}$ and $q\leq \frac{1}{8}$.
Therefore,  we have  that
  $$ \frac{1}{1-2q}\leq \frac{4}{3}$$
  and
  $$16q(1-5q) \leq 16\times  \frac{1}{10}\times
  \left(1-5\times \frac{1}{10}\right)= \frac{4}{5}.$$

Recalling that $c_{n-1}\geq 0$, by (\ref{tech_lemma}),  we have that
 $$\begin{array}{ll}
    & c_{n}\\
     =& \frac{1}{2}[(2p-1)^2d_{n-1} - \frac{2(8p-7)}{2p-1}c_{n-1} -
     3c_{n-1}^2\\
    &  - \frac{(8p-7)(4p-3)}{(2p-1)^2}]\\
    \leq &  \frac{1}{2} [(2p-1)^2 d_{n-1} -  \frac{(8p-7)(4p-3)}{(2p-1)^2}]\\
        = &  \frac{1}{2(2p-1)^2} [(2p-1)^4 d_{n-1} - (8p-7)(4p-3)]\\
    \leq & \frac{1}{2(2p-1)^2} [ (2p-1)^6 - (8p-7)(4p-3)]\\
    = & \frac{1}{2(2p-1)^2} [ (1-2q)^6 - (1-8q)(1-4q)]\\
    = & \frac{q}{(2p-1)^2} [2q(7-40q+60q^2-48q^3+16q^4)]\\
    \leq & \frac{q}{(2p-1)^2} [2q(7-40q+ 60q^2+16q^4)]\\
    \leq  &  \frac{q}{(2p-1)^2} [2q(7-40q + \frac{60}{64}+ \frac{1}{256})]\\
    \leq & \frac{q}{(2p-1)^2} [2q(8-40q)]\\
    = & \frac{q}{2p-1} \frac{16q(1-5q)}{1-2q}\\
    \leq & \frac{4q}{5(2p-1)} \frac{1}{1-2q}\\
\end{array}
$$
Since $q\leq \frac{1}{8}$ and  $\frac{1}{1-2q} \leq \frac{4}{3}$,
     $ c_{n} \leq  \frac{16q}{15(2p-1)} \leq  \frac{5q}{4(2p-1)}. $
\end{proof}

\begin{theorem}
\label{Thm42}
 Suppose $\frac{7}{8} \leq  p \leq  1$. Then
$$\lim_{n \r \infty} C_n = \frac{2(1-p)}{2p-1}$$
and
$$ \lim_{n \r \infty} D^2_n = \frac{(8p-7)(4p-3)}{(2p-1)^4}.$$
\end{theorem}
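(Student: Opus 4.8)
The plan is to translate the two claimed limits into statements about the auxiliary sequences $c_n = \frac{2(1-p)}{2p-1} - C_n$ and $d_n = D_n^2$ introduced just before Lemma~\ref{key_lemma_incase78}: the assertion $\lim_n C_n = \frac{2(1-p)}{2p-1}$ is equivalent to $\lim_n c_n = 0$, while $\lim_n D_n^2 = \frac{(8p-7)(4p-3)}{(2p-1)^4}$ is exactly $\lim_n d_n = \frac{(8p-7)(4p-3)}{(2p-1)^4}$. First I would collect what Lemma~\ref{key_lemma_incase78} already supplies for $p \geq 7/8$: part~(2) shows that $(d_n)_{n\geq 2}$ is nonincreasing, and since $d_n = D_n^2 \geq 0$ it is bounded below, so $d := \lim_n d_n$ exists by monotone convergence; parts~(1) and~(3) pin $c_n$ into $\left[0,\frac{5(1-p)}{4(2p-1)}\right]$, which in particular keeps $(2p-1)c_n \leq \frac{5(1-p)}{4} \leq \frac{5}{32} < 1$ because $p\geq 7/8$ forces $1-p \leq \frac18$.

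The heart of the argument is to prove $\lim_n c_n = 0$, and here I would split on the value of $d$. If $d > 0$, I would divide (\ref{recurrence_incase78_2}) to obtain $\frac{d_n}{d_{n-1}} = \left[1-(2p-1)c_{n-1}\right]^2$; the left-hand side tends to $1$, while the base $1-(2p-1)c_{n-1}$ stays in the fixed interval $\left[1-\tfrac{5}{32},\,1\right]\subset(0,1]$, so taking the (continuous, positive) square root forces $1-(2p-1)c_{n-1}\to 1$ and hence $c_{n-1}\to 0$. If instead $d = 0$, this ratio argument degenerates, so I would read off a squeeze directly from (\ref{recurrence_incase78_1}): discarding the manifestly nonnegative terms $\frac{2(8p-7)}{2p-1}c_{n-1}$ and $3c_{n-1}^2$ (nonnegative since $c_{n-1}\geq 0$ and $p\geq 7/8$) gives $2c_n \leq (2p-1)^2 d_{n-1} - \frac{(8p-7)(4p-3)}{(2p-1)^2}$, and letting $n\to\infty$ with $d_{n-1}\to 0$ yields $0 \leq 2\limsup_n c_n \leq -\frac{(8p-7)(4p-3)}{(2p-1)^2} \leq 0$, so $\limsup_n c_n = 0$ and therefore $c_n\to 0$ (this incidentally forces $(8p-7)(4p-3)=0$, i.e.\ the case $d=0$ only arises at $p=\frac78$). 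In both cases $c_n\to 0$.

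With $c_n\to 0$ and $d_n\to d$ established, I would finish by passing to the limit in (\ref{recurrence_incase78_1}): the terms in $c_n$, $c_{n-1}$, and $c_{n-1}^2$ all vanish, leaving $0 = (2p-1)^2 d - \frac{(8p-7)(4p-3)}{(2p-1)^2}$, whence $d = \frac{(8p-7)(4p-3)}{(2p-1)^4}$. Unwinding the substitutions then gives $\lim_n D_n^2 = d = \frac{(8p-7)(4p-3)}{(2p-1)^4}$ and $\lim_n C_n = \frac{2(1-p)}{2p-1} - \lim_n c_n = \frac{2(1-p)}{2p-1}$, as claimed.

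I expect the only real obstacle to be the step $c_n\to 0$, and specifically the degenerate case $d=0$. The natural ratio argument from (\ref{recurrence_incase78_2}) controls $c_n$ only when $d>0$, yet at the boundary $p=\frac78$ one genuinely has $d=0$, so a separate nonnegativity-based squeeze from (\ref{recurrence_incase78_1}) is needed to cover that case; everything else is routine monotone convergence together with continuity of the two recurrences.
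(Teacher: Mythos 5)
Your proposal is correct and follows essentially the same route as the paper: both rest on Lemma~\ref{key_lemma_incase78} (monotonicity of $d_n$ and the bounds on $c_n$), use the ratio argument on (\ref{recurrence_incase78_2}) to get $c_n \to 0$ when $\lim_n d_n > 0$, and use a sandwich from (\ref{recurrence_incase78_1}) in the degenerate case. The only differences are organizational: you split cases on the value of $d = \lim_n d_n$ where the paper splits on $p > \frac{7}{8}$ versus $p = \frac{7}{8}$ (using its lower bound $d_n \geq \frac{(8p-7)(4p-3)}{(2p-1)^4}$ to see $d > 0$ in the former case), and you make explicit the final identification of $d$ by passing to the limit in (\ref{recurrence_incase78_1}), a step the paper leaves implicit for $p > \frac{7}{8}$.
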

\begin{proof}
 Since  $c_n\geq 0$ for all $n$,  Formula
(\ref{recurrence_incase78_1}) implies
$$d_n\geq \frac{(8p-7)(4p-3)}{(2p-1)^4}$$ for all $n$.
Since $d_n=D^2_{n}$ is a decreasing sequence,
$\lim _{n\rightarrow \infty} d_n$ exists and
is at least $\frac{(8p-7)(4p-3)}{(2p-1)^4}$, which is larger than 0
for $p>\frac{7}{8}$.  Since $0\leq c_{n}\leq 1$,
$$0\leq 1-(2p-1)c_{n} \leq 1.$$  For $p>\frac{7}{8},$
 Formula
(\ref{recurrence_incase78_2}) implies that
 $$\lim _{n\rightarrow \infty} 1-(2p-1)c_{n} =1$$
 and so $$\lim _{n\r \infty} c_n=0.$$
Hence, $\lim_{n\r \infty } C_n = \frac{2(1-p)}{2p-1}$.

For $p=\frac{7}{8}$,
 Formulas (\ref{recurrence_incase78_1}) and
(\ref{recurrence_incase78_2}) become
$$ 2c_n + 3c^2_{n-1} = \frac{9}{16} d_{n-1}$$
and
$$ d_n = \left(1- \frac{3}{4} c_{n-1}\right)^2d_{n-1}.$$
As a decreasing sequence, $d_n$ has an non-negative limit. If $\lim
_{n\r \infty} d_n=0$, by the Sandwich theorem, $\lim_{n\r
\infty}c_n=0$ from the fact that $0\leq 2c_n \leq
\frac{9}{16}d_{n-1}$. Therefore, $$\lim_{n\r \infty}
C_{n}=\frac{2(1-p)}{2p-1}$$ and
$$\lim _{n\r \infty } D^2_{n}=\frac{(8p-7)(4p-3)}{(2p-1)^4}.$$

If $\lim _{n \r \infty} d_n >0$, then,  $$ d_n = d_{n-1} (1-
\frac{3}{4} c_{n-1})^2$$ implies that $\lim _{n\r \infty} c_n =0$
and hence $\lim _{n\r \infty} d_n=0$, a contradiction.
\end{proof}

\begin{theorem} Let $T_n$ be the complete binary tree of $2^n$ leaves in which
the conservation rate is $p$ along each branch.
In the two-state Jukes-Cantor model,

(a)  ({\bf Steel} \cite{Steel_Thesis_89})
the accuracy of the Fitch method
for reconstructing the root state in $T_n$ converges as $n$ goes to infinity
to $\frac{1}{2}+\frac{1}{2(2p-1)^2}\sqrt{(8p-7)(4p-3)}$ if
$p\in [\frac{7}{8}, 1]$ and $\frac{1}{2}$ if $p\in [\frac{1}{8}, \frac{7}{8}]$.

(b) it diverges as $n$ goes to infinity if $p\in (0, \frac{1}{8})$;
\end{theorem}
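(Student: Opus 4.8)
The starting point is the identity $\mbox{RA}_F(T_n) = \frac{1}{2} + \frac{1}{2}D_n(p)$, which is exactly (\ref{definition}) specialized to the root $r$ of $T_n$, since $D_r = D_n(p)$ in our notation. Thus both parts of the statement reduce entirely to understanding the limiting behavior of the single scalar sequence $D_n(p)$, and the plan is to read off each regime of $p$ from Theorems~\ref{Thm41} and~\ref{Thm42} together with the symmetry furnished by Lemma~\ref{reduction}.

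For part (a) I would split into three subcases. When $\frac{1}{8} \le p < \frac{7}{8}$, Theorem~\ref{Thm41} gives $\lim_{n \r \infty} D_n = 0$ directly, so $\mbox{RA}_F(T_n) \r \frac{1}{2}$. When $p = \frac{7}{8}$, Theorem~\ref{Thm42} gives $\lim_{n \r \infty} D_n^2 = \frac{(8p-7)(4p-3)}{(2p-1)^4} = 0$ (the factor $8p-7$ vanishes), hence again $\lim D_n = 0$ and the accuracy tends to $\frac{1}{2}$; this patches the two half-open intervals at their shared endpoint. When $\frac{7}{8} < p \le 1$, I first observe that $D_n \ge 0$ for every $n$: since $2p-1 > 0$ and $1 + C_{n-1} > 0$, the recurrence (\ref{recurrence_2_simple}) preserves the sign of $D_0 = 1$ (and Lemma~\ref{Bound_on_D} already confines $D_n$ to $[0,1]$). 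Consequently $D_n = \sqrt{D_n^2} \r \frac{\sqrt{(8p-7)(4p-3)}}{(2p-1)^2}$ by Theorem~\ref{Thm42}, and substituting into the identity yields $\mbox{RA}_F(T_n) \r \frac{1}{2} + \frac{1}{2(2p-1)^2}\sqrt{(8p-7)(4p-3)}$, as claimed.

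For part (b) the point is that for $p \in (0, \frac{1}{8})$ the \emph{magnitude} $|D_n(p)|$ still converges, yet the \emph{sign} oscillates, so the sequence itself has no limit. Concretely, set $p' = 1 - p \in (\frac{7}{8}, 1)$; by Lemma~\ref{reduction}, $|D_n(p)| = |D_n(p')|$, and Theorem~\ref{Thm42} applied to $p'$ gives $|D_n(p)| \r L := \frac{\sqrt{(1-8p)(1-4p)}}{(2p-1)^2}$, which is \emph{strictly} positive because $p < \frac{1}{8}$ forces both $1 - 8p > 0$ and $1 - 4p > 0$. On the other hand, Lemma~\ref{reduction} and Corollary~\ref{C_bound} give $C_{n-1}(p) = C_{n-1}(p') \ge 0$, so $1 + C_{n-1}(p) > 0$, while now $2p - 1 < 0$; hence (\ref{recurrence_2_simple}) reverses the sign of $D_n(p)$ at each step, and since $D_0 = 1 \ne 0$ the sign of $D_n(p)$ is exactly $(-1)^n$ with $D_n(p) \ne 0$ for all $n$. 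Therefore the even-indexed terms tend to $L > 0$ and the odd-indexed terms tend to $-L < 0$; the two subsequential limits differ, so $D_n(p)$, and with it $\mbox{RA}_F(T_n) = \frac{1}{2} + \frac{1}{2}D_n(p)$, diverges.

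The only genuinely delicate step is part (b): one must resist inferring divergence from magnitude blow-up, since here $|D_n|$ in fact converges, and instead exploit the sign alternation produced by $2p - 1 < 0$. The two ingredients that make this work are the reduction Lemma~\ref{reduction}, which lets me borrow the magnitude limit from the already-analyzed regime $p' > \frac{7}{8}$, and the strict positivity of $L$, which is precisely why the hypothesis is $p < \frac{1}{8}$ rather than $p \le \frac{1}{8}$.
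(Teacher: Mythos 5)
Your proposal is correct and follows essentially the same route as the paper: reduce via $\mbox{RA}_F(T_n)=\frac{1}{2}+\frac{1}{2}D_n$, invoke Theorems~\ref{Thm41} and~\ref{Thm42} for part (a), and for part (b) use Lemma~\ref{reduction} to import the positive limit of $|D_n|$ from the regime $p'=1-p>\frac{7}{8}$ together with the sign alternation of $D_n$ forced by $2p-1<0$ in (\ref{recurrence_2_simple}). The only difference is that you make explicit several details the paper leaves implicit (the nonnegativity of $D_n$ needed to pass from $\lim D_n^2$ to $\lim D_n$, the exact sign pattern $(-1)^n$, and the matching of the two formulas at the endpoint $p=\frac{7}{8}$), which amplifies rather than alters the paper's argument.
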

\begin{proof}
By Formula (\ref{definition}) and the definition of $D_n$,
$$ RA_F(T_n)= \frac{1}{2}+\frac{1}{2}D_{n}.$$
Hence, the fact (a) follows from Theorems~\ref{Thm41} and \ref{Thm42}.

When $0< p <\frac{1}{8}$, $D_n >0$ for even integers $n$ and $D_n<0$ for odd
integers $n$. By  Lemma~\ref{reduction} and Theorem~\ref{Thm42},
$|D_n|$ converges to  a positive number. Hence $D_n$ and $RA_F(T_n)$ diverge.
\end{proof}

\section{The reconstruction  accuracy on ultrametric trees}

We now consider the accuracy of reconstructing the root state in
ultrametric phylogenies. In  an ultrametric phylogeny $T$,  a  branch
$xy$ has a length $t_{xy}$,  but  all the leaves have the same
distance from the root.
 Under the two-state Jukes-Cantor model, the conservation
probability $p_{xy}$ along  a branch $xy$ of length $t_{xy}$ is
 $$ p_{xy}=\frac{1}{2}(1+e^{-2\lambda t_{xy}}),$$
 where $\lambda$ is a constant, representing the substitution rate
 in $T$. 
 For an internal node $u$ of $T$,
the distance between it and any of  its leaf descendants is defined
as its depth, denoted by $d(u)$.

 \begin{lemma}
\label{Lemma4.1}
 Let $T$ be an ultrametric phylogeny and $u$  an internal node.
 Under the 2-state Jukes-Cantor
 model, for any path $P(x, y)$ from an internal node $x$ to its leaf
 descendant $y$,
\begin{eqnarray}
  \prod _{uv\in P(x, y)} (2p_{uv}-1) = e^{-2\lambda d(x)}.
\end{eqnarray}
 \end{lemma}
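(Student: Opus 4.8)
The plan is to reduce the product over branches to a single exponential whose exponent telescopes into the total length of the path from $x$ to $y$, and then to invoke the ultrametric property to identify that total length with the depth $d(x)$.

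First I would extract a single-branch identity directly from the Jukes-Cantor formula. Since $p_{uv}=\frac{1}{2}\left(1+e^{-2\lambda t_{uv}}\right)$ for every branch $uv$ of $T$, rearranging gives $2p_{uv}-1=e^{-2\lambda t_{uv}}$. This is the only place the model enters the argument. Substituting this into the product and using that a product of exponentials is the exponential of the sum of exponents, I obtain
$$\prod_{uv\in P(x,y)}(2p_{uv}-1)=\prod_{uv\in P(x,y)}e^{-2\lambda t_{uv}}=\exp\left(-2\lambda\sum_{uv\in P(x,y)}t_{uv}\right).$$

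Finally I would observe that $\sum_{uv\in P(x,y)}t_{uv}$ is precisely the sum of the branch lengths along the path from $x$ down to its leaf descendant $y$, i.e. the distance from $x$ to $y$ in $T$. By the definition of an ultrametric tree this distance is the same for every leaf descendant $y$ of $x$, and that common value is exactly the depth $d(x)$ as defined just before the lemma. Substituting $\sum_{uv\in P(x,y)}t_{uv}=d(x)$ into the displayed identity yields $e^{-2\lambda d(x)}$, as claimed.

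I expect no genuine obstacle here: the statement is a direct computation once the single-branch identity $2p_{uv}-1=e^{-2\lambda t_{uv}}$ is in hand. The only point requiring a moment of care is the last step, namely confirming that the exponent produced by the telescoping, the total path length $\sum_{uv\in P(x,y)}t_{uv}$, is correctly identified with $d(x)$; this identification is guaranteed precisely because the depth was defined as the distance from $x$ to any of its leaf descendants, which is well-defined exactly by the ultrametric hypothesis.
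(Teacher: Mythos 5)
Your proof is correct and follows exactly the same route as the paper's (one-line) proof: both rest on the single-branch identity $2p_{uv}-1=e^{-2\lambda t_{uv}}$ together with the observation that $d(x)=\sum_{uv\in P(x,y)}t_{uv}$ by the ultrametric property. You have merely spelled out the telescoping of the product into an exponential of the summed branch lengths, which the paper leaves implicit.
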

 \begin{proof}
It follows from that
   $2p_{uv}-1 =e^{-2\lambda t_{uv}}$ for each edge  $uv$ and that
   $d(x)=\sum_{uv\in P(x, y)}t_{uv}$.
 \end{proof}

Let $T$ be an ultrametric tree that has three or more leaves.
%
For any internal node $w$ with children $w_1$ and $w_2$,
  by Formula~(\ref{recurrence_for_D}) and Lemma~\ref{Lemma4.1}, we have that
 \begin{eqnarray}
\label{Formula_Dw}
   D_w & \geq \frac{1}{2} (2p_{ww_1}-1) D_{w_1} + \frac{1}{2}(2p_{ww_2}-1)
   D_{w_2}
 \end{eqnarray}
 because $C_{w_1}, C_{w_2}\geq 0$.
 By induction, 
 we can show the following
fact from Formula (\ref{Formula_Dw}).

\begin{lemma}
\label{Lemma4.2}
  \begin{eqnarray*}
    D_w \geq  \prod_{(u, v)\in P(w, l)} (2p_{uv}-1) =e^{-2\lambda
     d(w)},
  \end{eqnarray*}
  where $l$ is a leaf below $w$.
\end{lemma}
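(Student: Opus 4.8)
The plan is to prove the inequality by induction on the height of the subtree rooted at $w$ (equivalently, on the number of nodes below $w$), feeding the single-edge identity $2p_{uv}-1 = e^{-2\lambda t_{uv}}$ together with the recurrence bound (\ref{Formula_Dw}) into the ultrametric depth decomposition. Since the stated equality $\prod_{(u,v)\in P(w,l)}(2p_{uv}-1) = e^{-2\lambda d(w)}$ is exactly Lemma~\ref{Lemma4.1}, it suffices to establish the inequality $D_w \geq e^{-2\lambda d(w)}$.

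For the base case I take $w$ to be a leaf: then the path $P(w,l)$ is empty (with $l=w$), $d(w)=0$, and by (\ref{Value_for_leaf}) we have $D_w = 1 = e^{-2\lambda \cdot 0}$, so the claim holds with equality. The key structural fact driving the inductive step is the ultrametric depth identity: if $w$ is internal with children $w_1$ and $w_2$, then every path from $w$ to a leaf below it continues through exactly one child, so $d(w) = t_{ww_i} + d(w_i)$ for $i=1,2$; the ultrametric hypothesis is precisely what guarantees that these two expressions agree.

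In the inductive step I assume $D_{w_i} \geq e^{-2\lambda d(w_i)}$ for each child $w_i$. Using $2p_{ww_i}-1 = e^{-2\lambda t_{ww_i}}$ and the depth identity, each summand of (\ref{Formula_Dw}) obeys $\tfrac{1}{2}(2p_{ww_i}-1)D_{w_i} \geq \tfrac{1}{2}e^{-2\lambda t_{ww_i}}e^{-2\lambda d(w_i)} = \tfrac{1}{2}e^{-2\lambda d(w)}$, and adding the two bounds yields $D_w \geq e^{-2\lambda d(w)}$, which with Lemma~\ref{Lemma4.1} closes the induction. The one point requiring care — and the only place the argument could break — is that multiplying the induction hypothesis by the factor $(2p_{ww_i}-1)$ preserves the direction of the inequality only because both factors are non-negative; here I must invoke that the conservation probabilities are at least $\tfrac{1}{2}$ (so $2p_{ww_i}-1 \geq 0$) and that the induction hypothesis already delivers $D_{w_i} \geq e^{-2\lambda d(w_i)} > 0$. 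I note too that the clean collapse of the two $\tfrac{1}{2}$ prefactors into a single $e^{-2\lambda d(w)}$ is exactly what the ultrametric balance of the two subpaths provides: in a non-ultrametric tree the two child contributions would carry different depths and this sharp bound would no longer hold.
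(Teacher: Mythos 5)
Your proof is correct and is exactly the argument the paper intends: the paper merely asserts the lemma follows ``by induction'' from Formula~(\ref{Formula_Dw}), and your induction on the subtree rooted at $w$ --- base case $D_w=1$ at leaves via (\ref{Value_for_leaf}), inductive step multiplying $D_{w_i}\geq e^{-2\lambda d(w_i)}$ by $2p_{ww_i}-1=e^{-2\lambda t_{ww_i}}\geq 0$ and using the ultrametric identity $d(w)=t_{ww_i}+d(w_i)$ so the two halves sum to $e^{-2\lambda d(w)}$ --- is precisely the omitted detail. Your care about the sign of $2p_{ww_i}-1$ is sound, though under the stated model $2p_{uv}-1=e^{-2\lambda t_{uv}}>0$ holds automatically.
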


By Formula (\ref{definition}),
the above lemma implies that the accuracy of reconstructing the root state from
all the leaf states  is
not less than from a single leaf. Such a  fact  was  established  by
Fischer and Thatte in \cite{Fischer_Man_2008}. It can be strengthen as
follows.

\begin{theorem}
\label{Thm4}
   Let $T$ be an ultrametric tree having three
   or more leaves and let $x$ be a child of its root $r$. If $x$ has two  children,
 then
\begin{eqnarray}
     D_r\geq e^{-2\lambda d(r)} [1 +\frac{1}{4}(1- e^{-4\lambda d(x)})]
\end{eqnarray}
\end{theorem}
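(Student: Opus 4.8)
The plan is to peel off the top two levels of $T$ explicitly and reduce everything to a single elementary inequality. Since $T$ is binary with at least three leaves, its root $r$ has two children: $x$, which by hypothesis has children $x_1,x_2$, and a second child $y$ (a leaf or an internal node). First I would record the conservation factors $b=2p_{rx}-1$, $c=2p_{ry}-1$, $a_i=2p_{xx_i}-1$, all of the form $e^{-2\lambda t}$, together with the depth weights $E_r=e^{-2\lambda d(r)}$, $E_x=e^{-2\lambda d(x)}$, $E_i=e^{-2\lambda d(x_i)}$, $E_y=e^{-2\lambda d(y)}$. The ultrametric identities behind Lemma~\ref{Lemma4.1} give the crucial multiplicative relations $a_iE_i=E_x$, $bE_x=E_r$, $cE_y=E_r$, and hence $a_1a_2E_1E_2=E_x^2$; these let every product of conservation factors and subtree values collapse to a multiple of $E_r$.

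Next I would substitute the recurrence (\ref{recurrence_for_D}) for $D_x$ into the first term of $D_r$ and the recurrence (\ref{recurrence_for_C}) for $C_x$ into the second term, writing $D_{x_i}=\rho_iE_i$ and $D_y=\sigma E_y$ where $\rho_1,\rho_2,\sigma\ge 1$ by Lemma~\ref{Lemma4.2}. Using the identities above, every term becomes a multiple of $E_r$, yielding the normalized form
\[ \frac{D_r}{E_r}=\frac14(1+C_y)\bigl[(1+C_{x_2})\rho_1+(1+C_{x_1})\rho_2\bigr]+\frac12\,\sigma\,(1+C_x). \]
Discarding the factors $C_y\ge 0$ and $\sigma\ge 1$ (legitimate since $1+C_x\ge 0$), then replacing $C_x$ by $\tfrac12[1-C_{x_1}-C_{x_2}+3C_{x_1}C_{x_2}-E_x^2\rho_1\rho_2]$ and clearing denominators, I would use $C_{x_j}\rho_i\ge C_{x_j}$ to cancel the linear $C$-terms and drop $3C_{x_1}C_{x_2}\ge 0$. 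The target $D_r\ge E_r\bigl[1+\tfrac14(1-E_x^2)\bigr]$ then reduces to the single inequality
\[ \rho_1+\rho_2-2\ \ge\ E_x^2\,(\rho_1\rho_2-1). \]

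The hard part is this last inequality, because the negative term $-a_1a_2D_{x_1}D_{x_2}=-E_x^2\rho_1\rho_2$ carried inside $C_x$ cannot be absorbed by bounding $D_{x_i}$ crudely: replacing $D_{x_i}$ by its Lemma~\ref{Lemma4.2} value $E_i$ (i.e.\ setting $\rho_i=1$) is too lossy precisely when a child of $x$ roots a deep subtree, so the ratios $\rho_i$ must be retained. The resolution I would use is the hidden constraint $E_x\rho_i=a_iD_{x_i}\le 1$ (from $a_i\le 1$ and $D_{x_i}\le 1$), which ties the depth of $x$ to how far $D_{x_i}$ can exceed $E_i$. Substituting $w_i=E_x\rho_i\in[E_x,1]$, the inequality becomes $w_1+w_2-E_xw_1w_2+E_x^3-2E_x\ge 0$; its left side is nondecreasing in each $w_i$ (the partial derivative is $1-E_xw_j\ge 0$) and vanishes at $w_1=w_2=E_x$, hence is nonnegative throughout. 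This closes the argument and also explains why equality holds exactly for the three-leaf tree, where $\rho_1=\rho_2=1$.
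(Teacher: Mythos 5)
Your proposal is correct and follows essentially the same route as the paper's proof: you peel off the root and the node $x$, write $D_{x_1},D_{x_2},D_y$ as their Lemma~\ref{Lemma4.2} lower bounds times ratios $\rho_1,\rho_2,\sigma\ge 1$ (the paper's $1+\Delta(u)$, $1+\Delta(v)$), discard the same nonnegative terms ($C_y$, $3C_{x_1}C_{x_2}$, the cross terms), and arrive at exactly the core inequality $\rho_1+\rho_2-2\ge E_x^2(\rho_1\rho_2-1)$ that the paper reaches in $\Delta$-notation. Your finish via $w_i=E_x\rho_i\in[E_x,1]$ and monotonicity of $w_1+w_2-E_xw_1w_2$ is just a slightly cleaner packaging of the paper's own use of the constraint $[1+\Delta(v)]e^{-4\lambda d(x)}\le D_v\le 1$, not a different argument.
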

\begin{proof}
By Lemma~\ref{Lemma4.2},
$D_y \geq e^{-2\lambda d(y)}.$
Since  $C_y \geq 0$,
   by Formula (\ref{recurrence_for_D}), we have that
{\small
\begin{eqnarray}
   D_r &\hspace*{-1em} = &\hspace*{-1em} \frac{1}{2} (2p_{rx}-1) (1+ C_y) D_{x}
+
   \frac{1}{2}(2p_{ry}-1)(1+C_x)D_{y} \nonumber \\
   &\hspace*{-1em} \geq  &\hspace*{-1em} \frac{1}{2}(2p_{rx}-1)D_{x} +
\frac{1}{2} e^{-2\lambda d(r)} (1+C_x) \label{Estimate_Dr}
 \end{eqnarray}
}

Let $u$ and $v$ be the children of $x$.
By Lemma~\ref{Lemma4.2}, $D_u\geq e^{-2\lambda d(u)}$
and $D_v\geq e^{-2\lambda d(v)}$.
Let
$$D_u = e^{-2\lambda d(u)} (1+\Delta(u)),~~ D_v = e^{-2\lambda d(v)} (1+\Delta(v)),$$
where $\Delta(u), \Delta(v)\geq 0$.
We then have
{\small
\begin{eqnarray}
    D_x &\hspace*{-1em}=&\hspace*{-1em} \frac{1}{2}(2p_{xu}-1) (1+C_v)D_u+ \frac{1}{2}(2p_{xv}-1)
(1+C_u)D_v \nonumber \\
 &\hspace*{-1em} =  &\hspace*{-1em}  e^{-2\lambda d(x)}
\{\frac{1}{2} [1+C_v+\Delta(u)+C_v\Delta(u)]
\nonumber\\
& & + \frac{1}{2}[1+C_u + \Delta(v)+ C_u\Delta(v) ]\}  \nonumber \\
 & \hspace*{-1em}\geq  &\hspace*{-1em}  e^{-2\lambda d(x)}
\{1+\frac{1}{2}\left[ C_u+ C_v + \Delta (u)+
\Delta (v) \right]\}. \label{Estimate_Dx}
\end{eqnarray}
}
Combining Formulas~(\ref{Estimate_Dr}) and (\ref{Estimate_Dx}) gives that
{\small
\begin{eqnarray*}
   D_r \geq  e^{-2\lambda d(r)}\{1+ \frac{1}{2}C_x + \frac{1}{4}
[C_u + C_v + \Delta (u)+ \Delta (v)]\}
\end{eqnarray*}
}

By Formula (\ref{recurrence_for_C}),
{\small
\begin{eqnarray*}
   C_x &\hspace*{-1em}= &\hspace*{-1em}\frac{1}{2}[1-C_u-C_v + 3C_uC_v -
(2p_{xu}-1)(2p_{xv}-1)D_uD_v] \\
    &\hspace*{-1em}\geq &\hspace*{-1em}\frac{1}{2}[1-C_u-C_v  -
(2p_{xu}-1)(2p_{xv}-1)D_uD_v],  \\
&\hspace*{-1em}= &\hspace*{-1em}\frac{1}{2}\{1-C_u-C_v  -
e^{-4\lambda d(x)} [1+ \Delta(u)] [1+ \Delta(v)]\}.  \\
\end{eqnarray*}
}
We further have that
{\small
\begin{eqnarray*}
  D_r & \hspace*{-1em}\geq &\hspace*{-1em}\frac {1}{4}e^{-2\lambda d(r)}\\
  & \hspace*{-1em}& \hspace*{-1em}\times \{5 + \Delta(u)
+\Delta(v)
   -e^{-4\lambda d(x)} [1+ \Delta(u)] [1+ \Delta(v)]\}.
\end{eqnarray*}
}
Since $d(x) > d(v)$,
{\small
\begin{eqnarray*}
[1+\Delta (v) ]e^{-4\lambda d(x)} \le [1+ \Delta (v)]e^{-2\lambda d(v)} =D_v
\le 1.
\end{eqnarray*}
}
Therefore, we obtain that
{\small
\begin{eqnarray*}
& &\hspace*{-1em}\Delta(u)
+\Delta(v) -e^{-4\lambda d(x)} [1+ \Delta(u)] [1+ \Delta(v)]  \\
&\hspace*{-1em}\ge & \hspace*{-1em}\Delta (u) - e^{-4\lambda d(x)}
[1+ \Delta(u)] \\
&\hspace*{-1em}\ge & \hspace*{-1em}- e^{-4\lambda d(x)}
\end{eqnarray*}
}
and
{\small
\begin{eqnarray*}
   D_r
\geq  \frac{1}{4}  e^{-2\lambda d(r)} (5 - e^{-4\lambda d(x)} ) =
e^{-2\lambda d(r)} [1 + \frac{1}{4}(1 - e^{-4\lambda d(x)})].
\end{eqnarray*}
}
\end{proof}

It is known that there exists an ultrametric tree in which the root state can be
reconstructed more accurately from the states of a subset of four leaves than from all the leaf states.  Let  $l_1, l_2, l_3$ be three leaves in  $T$.
Assume that the least common ancestor (lca)  $t$ of $l_2$ and $l_3$
is not the root
$r$ and has depth $d(t)$.  If  the lca of $l_1$ and $t$ is the root,
then, the accuracy of reconstructing the root state
from these three leaves is $\frac{1}{2} + \frac{1}{2}  e^{-2\lambda d(r)} [1 +
\frac{1}{4}(1 - e^{-4\lambda d(t)})]$, which is at most
$\frac{1}{2} + \frac{1}{2}  e^{-2\lambda d(r)} [1 +
\frac{1}{4}(1 - e^{-4\lambda d(x)})]$ because $d(x)\geq d(t)$.
If the lca of $l_1$ and $t$ is not $r$, the accuracy is even smaller.
Therefore,  Theorem~\ref{Thm4} implies that
the reconstruction of the root state from all the leaf states is at least as
accurately as from the states of any three leaves.


\newpage

\appendix{
\bf Appendix: Proof of Theorem~\ref{mainThm}}

We first have that
\begin{eqnarray}
(p_X \alpha_X + q_X \beta_X) (1-\alpha_Y -\beta_Y) - (q_X \alpha_X +
p_X \beta_X) (1-\alpha_Y -\beta_Y)
  = (2p_X -1) C_YD_X, \label{Ineq1}\\
(1-\alpha_X -\beta_X)(p_Y \alpha_Y + q_Y \beta_Y) - (1-\alpha_X
-\beta_X)(q_Y \alpha_Y + p_Y \beta_Y)
 = (2p_Y-1) C_XD_Y, \label{Ineq2}
\end{eqnarray}
and
\begin{eqnarray}
  & & (p_X \alpha_X + q_X \beta_X) (p_Y \alpha_Y + q_Y \beta_Y) -(q_X \alpha_X
+ p_X \beta_X) (q_Y \alpha_Y + p_Y \beta_Y) \nonumber \\
  &=& (p_X+p_Y -1) (\alpha _X \alpha_Y - \beta_X \beta_Y) + (b-a) (\beta_X \alpha_Y - \alpha_X \beta_Y). \label{Ineq3}
\end{eqnarray}

Since
$$\alpha_X \alpha_Y - \beta_X \beta_Y = (\alpha_X -\beta_X)\alpha_Y + \beta_X(\alpha_Y - \beta_Y)$$
and
$$\beta_X \alpha_Y - \alpha_X \beta_Y = \alpha_X(\alpha_Y - \beta_Y) -(\alpha_X - \beta_Y)\alpha_Y,$$
combining the equalities (\ref{Ineq1})-(\ref{Ineq3}) given above
leads to
\begin{eqnarray*}
 D_{Z} = (2p_{X}-1) (1-\beta_Y) D_X + (2p_Y-1)(1-\alpha_X) D_Y.
\end{eqnarray*}
By symmetry,
\begin{eqnarray*}
 D_{Z} = (2p_{X}-1) (1-\alpha_Y) D_X + (2p_Y-1)(1-\beta_X) D_Y.
\end{eqnarray*}
Therefore,
\begin{eqnarray}
 D_{Z} & = & \frac{1}{2}(2p_{X}-1) (2-\alpha_Y-\beta _Y) D_X + \frac{1}{2}(2p_Y-1)(2-\alpha _X - \beta_X) D_Y \nonumber \\
  & = & \frac{1}{2}(2p_{X}-1) (1 +  C_Y) D_X + \frac{1}{2}(2p_Y-1)(1+
 C_X) D_Y
\end{eqnarray}
Moreover, we also have that
\begin{eqnarray*}
 & & \alpha_Z + \beta_Z\\
 & = & (p_Xp_Y + q_Xq_Y)(\alpha_X\alpha_Y + \beta_X\beta_Y) + (q_Xp_Y+p_Xq_Y)(\beta_X\alpha_Y +
\alpha_X \beta_Y) \\
  & & + (\alpha_X + \beta_X) (1-\alpha_Y -\beta_Y) + (1-\alpha_X -\alpha_Y)(\alpha_Y + \beta_Y).
\end{eqnarray*}
Since $$\alpha_X\alpha_Y +
\beta_X\beta_Y=\frac{1}{2}((1-C_X)(1-C_Y)+D_XD_Y)$$ and
$$\beta_X\alpha_Y + \alpha_X \beta_Y =\frac{1}{2}((1-C_X)(1-C_Y)-D_XD_Y),$$
 we obtain that
\begin{eqnarray*}
  1- C_Z = \frac{1}{2}\left[1 + C_X +C_Y - 3 C_X C_Y + (2p_X-1)(2p_Y-1)D_XD_Y\right],
\end{eqnarray*}
or equivalently
\begin{eqnarray}
 C_Z = \frac{1}{2}\left[1 - C_X - C_Y + 3 C_X C_Y - (2p_X-1)(2p_Y-1)D_XD_Y\right].
\end{eqnarray}

\end{document}